\numberwithin{equation}{section}
\theoremstyle{plain}
\newtheorem{assumption}{Assumption}[section]
\newtheorem{lemma}[assumption]{Lemma}
\newtheorem{definition}[assumption]{Definition}
\newtheorem{proposition}[assumption]{Proposition}
\newtheorem{corollary}[assumption]{Corollary}
\newtheorem{remark}[assumption]{Remark}
\newtheorem{example}[assumption]{Example}
\begin{document}

\title{Local degeneracy of Markov chain Monte Carlo methods}
\date{}
\author{Kengo KAMATANI\footnote{
Graduate School of Engineering Science, Osaka University, 
Machikaneyama-cho 1-3, Toyonaka-si, Osaka, 560-0043 , Japan, kamatani@sigmath.es.osaka-u.ac.jp. }
\footnote{Supported in part by Grant-in-Aid for Young Scientists (B) 22740055.}
}

\maketitle

\begin{abstract}
We study asymptotic behavior of Monte Carlo 
method. Local consistency is one of an ideal property of 
Monte Carlo method. However, it may fail to hold local consistency for 
several reason.
In fact, in practice, it is more important to study such a non-ideal behavior.
We call local degeneracy for one of a non-ideal behavior of 
Monte Carlo methods. We show some equivalent conditions for 
local degeneracy. 
As an application we study a Gibbs sampler (data augmentation) for
cumulative logit model with or without marginal augmentation. 
It is well known that natural Gibbs sampler does not work well for this model. 
In a sense of local consistency and degeneracy, marginal augmentation 
is shown to improve the asymptotic property. 
However, when the number of categories is large, 
both methods are not locally consistent. 
\end{abstract}

\section{Introduction}

This paper investigates a poor behavior of Markov chain Monte Carlo (MCMC) method.
There have a vast literature related to the sufficient conditions for a good behavior, ergodicity:
see reviews \cite{TierneyAOS94} and \cite{RR}
and textbooks such as \cite{N} and \cite{MT}. 
The Markov probability transition kernel of MCMC is Harris recurrent 
under fairly general assumptions. Moreover, 
it is sometimes geometrically ergodic. 
In practice, 
however  the performance can be bad even if it is geometrically 
ergodic. 


 In \cite{Kamatani10} we introduced a framework for the analysis of 
 Monte Carlo procedure. 
 Monte Carlo procedure is defined as a pair $\mathcal{M}=(M,e)$
 of underlying probability structure $M$ and a sequence of
 ``estimator'' $e=(e_m;m=1,2,\ldots)$ for the target probability distribution. 
Using the framework we constructed consistency, which is a good behavior of 
Monte Carlo procedure. 
Current study, we apply the framework to study bad behavior.

There are several bad behaviors for Monte Carlo procedure. 
Two extreme cases are, a) the sequence generated by Monte Carlo 
procedure has very poor mixing property, and b)
the sequence goes out to infinity. 
We call a) degeneracy and the paper is devoted to the study of 
the property. 
We focus on a) in this paper. 
For b), see Examples 3.2 and 3.3 of \cite{Kamatani10}.

\subsection{Degeneracy}

To describe degeneracy more precisely, we consider a numerical simulation for the following 
simple model:
\begin{displaymath}
P(Y=1|\theta,x)=\Phi(\theta x),\ 
P(Y=0|\theta,x)=1-P(Y=1|\theta,x)
\end{displaymath}
where $x$ is a $\mathbf{R}$-valued explanatory variable and 
$\theta$ is a parameter
and $\Phi$ is a cumulative distribution function of the normal distribution (See Section \ref{normal}). 
Explanatory variable $x$ is generated from uniformly distribution on $(0,1)$.
We define two Gibbs sampler $\mathcal{M}_n$ and $\mathcal{N}_n$.

\subsubsection{Gibbs sampler $\mathcal{M}_n$}
Assume we have observation $y_n=(y^1,\ldots, y^n)$ and 
$x_n=(x^1,\ldots, x^n)$ and  $\theta$ prior is set to be standard normal distribution. 
There are two ways for construction of the Gibbs sampler. 
One way is to prepare latent variable $z\in\mathbf{R}$ from 
$N(0,1)$  and set
\begin{displaymath}
 y=\left\{\begin{array}{cc}1&\mathrm{if}\ z\le\theta x \\ 0& \mathrm{if}\ z>\theta x\end{array}\right. .
\end{displaymath}
Then Gibbs sampler is generated by iterating the following procedure:
For given $\theta$ and for $i=1,\ldots, n$, generate
$z^i$ from
$N(0,1)$
 truncated to $(-\infty,\theta x^i]$
if $y^i=1$
and
truncated to
 $(\theta x^i,\infty)$  if $y^i=0$.
Then update $\theta$ from $N(0,1)$ truncated to 
an interval
\begin{displaymath}
[\max_{y^i=1}\frac{z^i}{x^i}.\min_{y^i=0} \frac{z^i}{x^i}).
\end{displaymath} 
Write $\mathcal{M}_n$ for this Gibbs sampler. 

\subsubsection{Gibbs sampler $\mathcal{N}_n$}
Similarly, we define another Gibbs sampler by taking latent variable $z^i$
from $N(-\theta x^i,1)$, which is a  normal distribution with mean 
$-\theta x^i$ with variance $1$ and set 
\begin{displaymath}
 y=\left\{\begin{array}{cc}1&\mathrm{if}\ z\le0 \\ 0& \mathrm{if}\ z>0\end{array}\right. .
\end{displaymath}
Then Gibbs sampler is generated by iterating the following procedure:
For given $\theta$ and for $i=1,\ldots, n$, generate
$z^i$ from
$N(-\theta x^i,1)$
 truncated to $(-\infty,0]$
if $y^i=1$
and  truncated to $(0,\infty)$
if $y^i=0$.
Then update $\theta$ from normal distribution with mean $\mu$ and 
variance $\sigma^2$ defined by
\begin{displaymath}
\mu=-\frac{\sum_{i=1}^n x^iz^i}{1+\sum_{i=1}^n (x^i)^2},\ 
\sigma^2=\frac{1}{1+\sum_{i=1}^n (x^i)^2}. 
\end{displaymath}
Write $\mathcal{N}_n$ for this Gibbs sampler. 

We obtain two Gibbs samplers $\mathcal{M}_n$ and $\mathcal{N}_n$. 
Although the constructions are similar
and both of which have geometric ergodicity, the performances are different. 
\begin{figure}[htbp]
\includegraphics[width=12cm,bb=0 0 779 500]{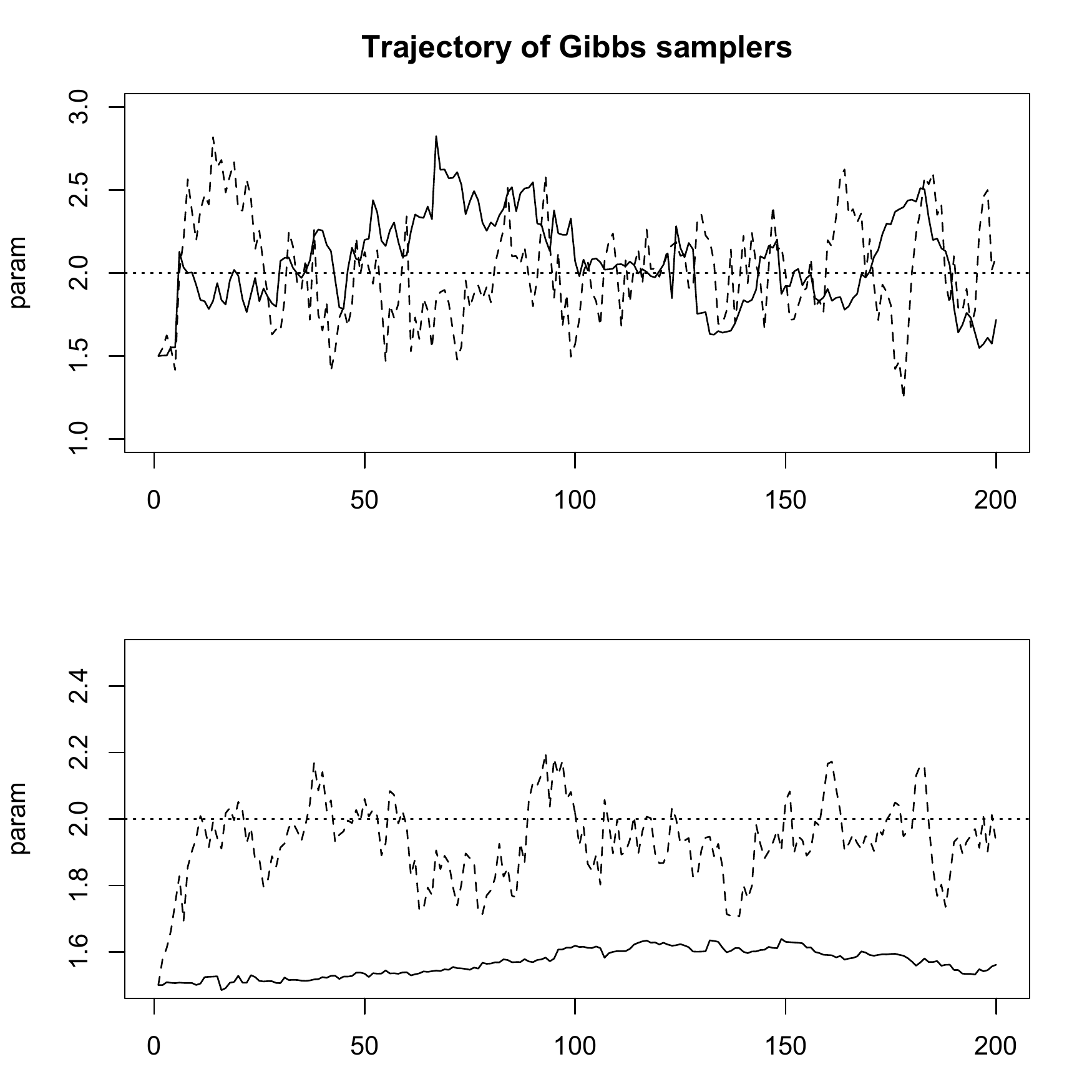}
\caption{Trajectory of the Gibbs samplers for sample size $n=100$ (upper) and $n=1000$ (lower). Solid line is for $\mathcal{M}_n$ and dashed lines is for $\mathcal{N}_n$.\label{Figure1}}
 \end{figure}
Figure \ref{Figure1} is a trajectory of the Gibbs sampler sequence 
\begin{displaymath}
\theta(0),\ldots, \theta(m-1)
\end{displaymath}
for iteration $m=200$ and sample size $n=100$ (upper) and  $n=1000$ (lower). 
For each sample size, by ergodicity, empirical distributions tend to the same posterior distribution of $\theta$ for two Gibbs samplers as $m\rightarrow\infty$. 
However the solid line $\mathcal{M}_n$ has poor mixing property than 
$\mathcal{N}_n$. Therefore it may produces a poor estimation of the posterior distribution. 

The difference becomes larger when the sample size $n=1000$
in Figure \ref{Figure1} (lower). 
The trajectory from $\mathcal{M}_n$ (solid line)
is almost constant. 
For both simulations, the true value is $\theta_0=2$
and the initial value $\theta(0)$ is set to $1.5$. 

Even though $\mathcal{M}_n$ has geometric ergodicity, it has poor mixing property. 
We would like to say $\{\mathcal{M}_n;n=1,2,\ldots\}$ is degenerate. Later we will prove that it is degenerate after certain localization. 
On the other hand $\{\mathcal{N}_n;n=1,2,\ldots\}$ is consistent under the same scaling by
Theorem 6.4 of  \cite{Kamatani10}. 

We study such a poor behavior, degeneracy, in this paper. 
The analysis may seem to be just a formalization of obvious facts. 
However, sometimes degeneracy can not be directly visible
and it produces non-intuitive results.  
In this paper, we obtain the following results
for (Markov chain) Monte Carlo methods.

\begin{enumerate}
\item
Degeneracy and local degeneracy of Monte Carlo procedure are defined and analyzed. 
\item 
As an example, we studied cumulative link model. Marginal augmentation method 
is known to work at least as good as the original Gibbs sampler. 
We show that in some cases, marginal augmentation really improves the asymptotic property, and the rest of the cases, surprisingly, we show that both of the MCMC methods does not have 
local consistency. 
\end{enumerate}


The paper is organized as follows. 
Section \ref{dmc} is devoted to a study of degeneracy of Monte Carlo procedure in general. 
In Subsection \ref{clc} we briefly review consistency of Monte Carlo procedure, and after that we define degeneracy and apply it to Markov chain Monte Carlo procedure. 
Next we examine the degeneracy for an example, cumulative link model. 
We prepare Section \ref{apc} for the asymptotic property of cumulative link model
itself. There is no Monte Carlo procedure in this section. 
In Section \ref{ags} we apply degeneracy to the model and obtain asymptotic properties of Markov chain Monte Carlo methods for cumulative link model.

\subsection{Notation}
Let $\mathbf{N}=\{1,2,\ldots,\}$ and $\mathbf{N}_0=\{0,1,2,\ldots\}$. 
We write the integer part of $x\in\mathbf{R}$ by $[x]$. 

\subsubsection{Probability measure, Transition kernel}\label{tk}
For measurable spaces $(E,\mathcal{E})$, 
the space of probability measures on $(E,\mathcal{E})$
is denoted by $\mathcal{P}(E)$. 

For two measurable space $(E,\mathcal{E})$ and $(F,\mathcal{F})$, 
a probability transition kernel $K$ from $E$ to $F$ is 
a map $K:E\times\mathcal{F}\rightarrow [0,1]$ such that
\begin{enumerate}
\item $K(x,\cdot)$ is a probability measure on $(F,\mathcal{F})$ for 
$x\in E$. 
\item $K(\cdot, A)$ is $\mathcal{E}$-measurable for any $A\in\mathcal{F}$. 
\end{enumerate}
We may write $K(dy|x)$ instead of $K(x,dy)$. 
If $K(x,\cdot)$ is $\sigma$-finite measure instead of probability measure, 
we call $K$ a transition kernel. 

\subsubsection{Normal distribution}\label{normal}
Write $\phi(x)=\exp(-x^2/2)/\sqrt{2\pi}$
for a probability distribution function of $N(0,1)$
and write $\Phi(x)=\int_{-\infty}^x \phi(y)dy$. 
For $\mu\in\mathbf{R}^p$ and $p\times p$-positive definite matrix $\Sigma$, 
a function 
$\phi(x;\mu,\Sigma)=\exp(-x^T\Sigma^{-1}x/2)/(2\pi\mathrm{det}(\Sigma))^{1/2}$
is a probability distribution function of $N(\mu,\Sigma)=N_p(\mu,\Sigma)$
where $\mathrm{det}(\Sigma)$ is a determinant of $\Sigma$ and 
$x^T$ is a transpose of a vector $x\in\mathbf{R}^p$. 

\subsubsection{Central value}
For a probability measure $\mu$ on $\mathbf{R}$, a central value is 
a point $\overline{x}\in\mathbf{R}$ satisfying 
\begin{displaymath}
\int_{\mathbf{R}} \arctan(x-\overline{x})\mu(dx)=0.
\end{displaymath}
Element of $\mathbf{R}^p$ is denoted by $x=(x^1,\ldots, x^p)^T$. 
For a probability measure $\mu$ on  $\mathbf{R}^p$, 
let $\mu^i(A)$ be $\int_{x\in\mathbf{R}} 1_A(x^i)\mu(dx)$ for $A\in\mathcal{B}(\mathbf{R})$. 
For $\mu$, 
we call
$\overline{x}=(\overline{x}^1,\overline{x}^2,\ldots,\overline{x}^p)^T\in\mathbf{R}^p$
 central value if each $\overline{x}^i$ is a central value of $\mu^i$. 
There is no practical reason for the use of the central value for 
Markov chain Monte Carlo procedure as is used in this paper. 
We use it because of its existence and continuity. That is, (a) for the posterior distribution
  $P_n(d\theta|x_n)$, its mean does not always exist but the central value does and moreover, it is unique and (b) if $\mu_n\rightarrow \mu$, then the central value of $\mu_n$
  tends to that of $\mu$. See \cite{Ito}. 

\section{Degeneracy of Markov chain Monte Carlo procedure}\label{dmc}

In this section, we introduce a notion of degeneracy and local degeneracy of 
Monte Carlo procedure. We use the same framework as \cite{Kamatani10}
to describe local degeneracy. In their approach, Monte Carlo procedure 
is considered to be a pair of random probability measure and transition kernels. 
We briefly review their framework in Subsection \ref{clc}.

\subsection{Consistency and local consistency}\label{clc}

In this subsection, we prepare a quick review of the framework of \cite{Kamatani10}. 
Let  $(S,\mathcal{S})$ be a measurable space. 
Let $(S^{\mathbf{N}_0},\mathcal{S}^{\mathbf{N}_0})$ be a countable product of $(S,\mathcal{S})$. 
Each element of $S^{\mathbf{N}_0}$ is denoted by 
$s_\infty=(s(0),s(1),\ldots)$ and its first $m$ subsequence is denoted by 
$s_m=(s(0),\ldots, s(m-1))$. 
Let $(\Theta,d)$ be a complete separable metric space equipped with Borel $\sigma$-algebra $\Xi$. 
We define non-random Monte Carlo procedure. The meaning of ``non-random'' will be clear after we define ``random'' Mote Carlo procedure
in Definition \ref{mcp}
and standard Gibbs sampler in Definition \ref{ssg}. 

\begin{definition}[Non-random Monte Carlo procedure]
A pair $\mathcal{M}=(M,e)$ is said to be non-random Monte Carlo procedure on $(S,\Theta)$ where 
$M$ is a probability measure on $S^{\mathbf{N}_0}$ and $e=(e_m;m=1,2,\ldots)$ is a sequence of probability transition kernels
$e_m$ from
$S^m$ to $\Theta$. 
\end{definition}

A simplest example of non-random Monte Carlo procedure is a non-random crude Monte Carlo procedure.

\begin{example}[Crude Monte Carlo]
If we want to calculate an integral $\int_\Theta f(\theta)\Pi(d\theta)$
for probability measure $\Pi$ and measurable function $f$, one approach 
is to generate i.i.d. sequence $\theta(0),\theta(1),\ldots$ from $\Pi$
and calculate $m^{-1}\sum_{i=0}^{m-1}f(\theta(i))$. 
In this case $S=\Theta$ and we write $\theta_m$ and $\theta_\infty$ instead of $s_m$ and $s_\infty$. 
This simple Monte Carlo method is sometimes called a crude Monte Carlo method.
We can describe it as a non-random Monte Carlo procedure. Let $M$ be a countable product of a probability measure $\Pi$ on $\Theta$
(that is, $M=\Pi^{\otimes \mathbf{N}_0}$)
and  $e_m$ be
\begin{equation}\label{emp}
e_m(\theta_m,\cdot)=\frac{1}{m}\sum_{i=0}^{m-1}\delta_{\theta(i)}\ (\theta_m=(\theta(0),\ldots,\theta(m-1)))
\end{equation}
where $\delta_\theta$ is a Dirac measure. Then 
$\int_\Theta f(\theta)e_m(\theta_m,d\theta)=m^{-1}\sum_{i=0}^{m-1}f(\theta(i))$. 
We call $(M,e)$ a crude Monte Carlo procedure.
\end{example}

\begin{example}[Accept-Reject method]
Accept-reject method generate i.i.d. sequence from $\Pi$
on $\Theta$ from another probability measure $Q$. 
Assume that $\Pi$ is absolutely continuous with respect to $Q$
and for some $M<\infty$, 
\begin{displaymath}
r(\theta):=\frac{d\Pi}{dQ}(\theta)\le M\ (\theta\in\Theta). 
\end{displaymath}
Generate i.i.d. sequence $\theta(0),\theta(1),\ldots$ from $Q$
and $u(0),u(1),\ldots$ from the uniform distribution $U[0,1]$. 
Then accept-reject method approximate $\Pi$ by
\begin{equation}\label{ar}
\sum_{i=0}^{m-1}\delta_{\theta(i)}\frac{1(u(i)\le M^{-1}r(\theta(i)))}{\sum_{i=0}^{m-1}1(u(i)\le M^{-1}r(\theta(i)))}.
\end{equation}
We can describe it as a non-random Monte Carlo procedure. Let $M$ be a countable product of a probability measure $Q\otimes U$ on $S:=\Theta\times [0,1]$
(that is, $M=(Q\otimes U)^{\otimes \mathbf{N}_0}$)
and  $e_m(s_m,\cdot)$ be as (\ref{ar})
where $s_m=(s(0),s(1),\ldots, s(m-1))$ and $s(i)=(\theta(i),u(i))\in S$. 
We call $(M,e)$ accept-reject procedure for $e=(e_m;m=1,2,\ldots)$. 
\end{example}

Now we consider a random Monte Carlo procedure. Let $(X,\mathcal{X},P)$ be a probability space. 
\begin{definition}[Monte Carlo procedure]\label{mcp}
A pair $\mathcal{M}=(M,e)$ is said to be Monte Carlo procedure defined on $(X,\mathcal{X},P)$ on $(S,\Theta)$ where 
$M$ is a probability transition kenel from $X$ to $S^{\mathbf{N}_0}$, that is
\begin{enumerate}
\item $M(x,\cdot)$ is a probability measure on $(S^{\mathbf{N}_0},\mathcal{S}^{\mathbf{N}_0})$. 
\item $M(\cdot,A_\infty)$ is $\mathcal{X}$-measurable for any $A_\infty\in \mathcal{S}^{\mathbf{N}_0}$. 
\end{enumerate}
and $e=(e_m;m=1,2,\ldots)$ is a sequence of a probability transition kernel 
$e_m$ from 
$X\times S^m$ to $\Theta$.
\end{definition}

We call $\mathcal{M}$ stationary if $M(x,\cdot)$ is (strictly) stationary for $P$-a.s. $x$. Stationarity plays an important role for the asymptotic behavior of Monte Carlo procedure.

Markov chain Monte Carlo procedure is a class of Monte Carlo procedure. 
Let $\mu$ be a probability transition kernel from $X$ to $S$ and $K$ be a probability transition kernel 
from $X\times S$ to $S$. We call a probability transition kernel$M$ from $X$ to $S^{\mathbf{N}_0}$
random Markov measure generated by $(\mu,K)$ if $M(x,\cdot)$ is a Markov measure having initial probability distribution 
$\mu(x,\cdot)$ and a probability transition kernel $K(x,\cdot,\cdot)$, that is, 
\begin{displaymath}
M(x,ds_\infty)=\mu(x,ds(0))K(x,s(0),ds(1))K(x,s(1),ds(2))\cdots. 
\end{displaymath}

\begin{definition}[Markov chain Monte Carlo procedure]
If Monte Carlo procedure $\mathcal{M}=(M,e)$ has $M$ as a random Markov measure, we call $\mathcal{M}$
Markov chain Monte Carlo procedure defined on $(X,\mathcal{X},P)$ on $(S,\Theta)$. 
\end{definition}

As a measure of efficiency,  we define consistency and local consistency for a sequence of Monte Carlo procedures. 
Let $(X_n,\mathcal{X}_n,P_n)$ be a probability space, 
$(S_n,\mathcal{S}_n)$ be a measurable space and 
$(\Theta_n,d^n)$ be a complete separable metric space equipped with Borel $\sigma$-algebra $\Xi_n$ for
$n=1,2,\ldots$. Let $\mathcal{M}_n=(M_n,e_n)$ where $e_n=(e_{n,m};m=1,2,\ldots)$ be a Monte Carlo procedure
on $(X_n,\mathcal{X}_n,P_n)$ on $(S_n,\Theta_n)$ for $n=1,2,\ldots$. 

The purpose of the Monte Carlo procedure is to approximate a sequence of probability transition kernels 
$(\Pi_n;n=1,2\ldots)$ from $X_n$ to $\Theta_n$ for each $n=1,2,\ldots$. 
Let $w_n$ be a bounded Lipshitz metric on $(\Theta_n,\Xi_n)$ defined by $d^n$. Then 
\begin{displaymath}
w_n(e_{n,m}(s_m,\cdot),\Pi_n(x_n,\cdot))
\end{displaymath}
measures a loss of the approximation of $\Pi_n(x_n,\cdot)$ by $e_{n,m}(s_m,\cdot)$. 
\begin{displaymath}
W_m(\mathcal{M}_n(x_n,\cdot),\Pi_n(x_n,\cdot)):=\int_{s_m\in S_n^{\mathbf{N}_0}}w_m(e_{n,m}(s_m,\cdot),\Pi_n(x_n,\cdot))M_n(x_n,ds_\infty)
\end{displaymath}
is an average loss with respect to $s_\infty$. We define a risk of the use of the Monte Carlo procedure 
$\mathcal{M}_n$ up to $m$ for an approximation of $\Pi_n$ by
\begin{displaymath}
R_m(\mathcal{M}_n,\Pi_n):=\int_{x_n\in X_n}W_m(\mathcal{M}_n(x_n,\cdot),\Pi_n(x_n,\cdot))P_n(dx_n).
\end{displaymath}

\begin{definition}[Consistency]
A sequence of Monte Carlo procedure $(\mathcal{M}_n;n=1,2,\ldots)$ 
is said to be consistent to $(\Pi_n;n=1,2,\ldots)$ if 
$R_{m_n}(\mathcal{M}_n,\Pi_n)\rightarrow 0$ for any $m_n\rightarrow\infty$. 
\end{definition}

When $\Pi_n(x_n,\cdot)$ tends to a point mass, the above consistency does not provide good information. 
In such a case, we consider local consistency. 
Let $\Theta_n\equiv\Theta\subset\mathbf{R}^p$ and a centering $\hat{\theta}_n:X_n\rightarrow\Theta$ be measurable.
We consider a scaling
$\theta\mapsto n^{1/2}(\theta-\hat{\theta}_n)$.
Let 
\begin{equation}\nonumber
\left\{\begin{array}{c}
\Pi^*_n(x_n,A):=\int 1_A(n^{1/2}(\theta-\hat{\theta}_n))\Pi_n(x_n,d\theta),\\ 
e_{n,m}^*(x_n,s_m,A):=\int 1_A(n^{1/2}(\theta-\hat{\theta}_n))
e_{n,m}(x_n,s_m,d\theta)
\end{array}\right.
\end{equation}
Let $\mathcal{M}_n^*=(M_n,e_n^*)$ for $e_n^*=(e_{n,m}^*;m=1,2,\ldots)$. 

\begin{definition}[Local Consistency]
If $(\mathcal{M}^*_n;n=1,2,\ldots)$ 
is consistent to $(\Pi_n^*;n=1,2\ldots)$, 
$(\mathcal{M}_n;n=1,2,\ldots)$ 
is said to be local consistent to $(\Pi_n;n=1,2,\ldots)$.
\end{definition}

\begin{remark}
This scaling is just one example. For other cases, such as mixture model considered in \cite{Kamatani11}, 
$\theta\mapsto n^{1/2}\epsilon_n^{-1}(\theta-\hat{\theta}_n)$ where $\hat{\theta}_n\equiv 0$ for some $\epsilon_n\rightarrow 0$. 
Moreover, the scaling factor ($n^{1/2}$ or $n^{1/2}\epsilon_n^{-1}$ in the above example) may depend on the observation. 
However, for the current paper, it is sufficient to consider the above scaling $\theta\mapsto n^{1/2}(\theta-\hat{\theta}_n)$.
\end{remark}

In the end of the subsection, 
we briefly review the definition of standard Gibbs sampler and extend it to non-i.i.d. structure. 
Let $(\Theta,d)$ be a complete separable metric space equipped with Borel $\sigma$-algebra $\Xi$.
Let $(X_n,\mathcal{X}_n)$ and $(Y_n,\mathcal{Y}_n)$ be measurable spaces. 
Assume the existence of probability transition kernels
\begin{displaymath}
P_n(d\theta|x_n,y_n),\ 
P_n(dy_n|x_n,\theta),\ 
P_n(d\theta|x_n)
\end{displaymath}
with probability measures 
$
P_n(dx_n,dy_n,d\theta).
$
Assume we have relations
\begin{displaymath}
P_n(dx_n,dy_n,d\theta)=P_n(d\theta|x_n,y_n)P_n(dx_n,dy_n)
=P_n(dy_n|x_n,\theta)P_n(dx_n,d\theta)
\end{displaymath}
where $P_n(dx_n,d\theta)$ and $P_n(dx_n,dy_n)$
are marginal distributions of $P_n(dx_n,dy_n,d\theta)$. 
Moreover, we assume 
\begin{displaymath}
P_n(dx_n,d\theta)=P_n(d\theta|x_n)P_n(dx_n)
\end{displaymath}
where $P_n(dx_n)$ is also a marginal distribution. 
Let 
\begin{displaymath}
\overline{\Pi}_n(x_n,ds)=P_n(dy_n|x_n,\theta)P_n(d\theta|x_n),\ 
\overline{K}_n(x_n,s,ds^*)=P_n(dy_n^*|x_n,\theta)P_n(d\theta^*|x_n,y_n^*)
\end{displaymath}
for $s=(y_n,\theta)$ and $s^*=(y_n^*,\theta^*)$.
Let $\overline{e}_n=(\overline{e}_{n,m};m=1,2,\ldots)$ be 
\begin{displaymath}
\overline{e}_{n,m}(x_n,s_m,A)=m^{-1}\sum_{i=0}^{m-1}1_A(\theta(i))\ (A\in\Xi)
\end{displaymath}
where $s_m=(s(0),\ldots, s(m-1))$ and $s(i)=(y(i),\theta(i))$. 

\begin{definition}[Sequence of standard Gibbs sampler]\label{ssg}
Set $\overline{M}_n$ as a random Markov measure 
generated by $(\overline{\Pi}_n,\overline{K}_n)$. 
Then $(\overline{\mathcal{M}}_n=(\overline{M}_n,\overline{e}_n);n=1,2,\ldots)$
is called 
a sequence of standard Gibbs sampler defined on
$(X_n,\mathcal{X}_n,P_n)$ on $(Y_n\times \Theta,\Theta)$. 
\end{definition}

Using the abbreviation defined in the next subsection, we can write
$\overline{\mathcal{M}}_n=(\overline{M}_n,\theta)$.

\subsection{Abbreviations}
The framework described in the previous subsection is useful as a formal definition for 
Monte Carlo procedures. However, it is sometimes inconvenient 
to write down $e=(e_m;m=1,2,\ldots)$ for every time. 
In this paper we use two abbreviations to denote Monte Carlo procedure $(M,e)$. First one is abbreviation for a class of empirical distribution. All examples of $e$ in the rest  of the aper  has the following form
\begin{displaymath}
e_m(x,s_m,A)=\frac{1}{m}\sum_{i=0}^{m-1}1_A(E(s(i)))
\end{displaymath}
where $s_m=(s(0),\ldots, s(m-1))$ and $E:S\rightarrow \Theta$. 
Then we write $(M,E)$ for $(M,e)$. 
We also use a notation $(M,E(s))$. 
For example, if $s=(y,\theta)$ and $E(s)=\theta\in\Theta$, then 
$(M,E)$ is denoted by $(M,\theta)$. 
If $S=\Theta$ and $E$ is the identity map, we write $(M,\mathrm{id})$. 

The second abbreviation is about transformation. 
Let $F:\Theta\rightarrow\Psi$ where $(\Psi,d_\Psi)$ is a Polish space. 
For a probability transition kernel $\mu(x,d\theta)$, we define
\begin{displaymath}
\mu^F(x,A)=\int_\Theta 1_A(F(\theta))\mu(x,d\theta). 
\end{displaymath}
Similarly, we define
\begin{displaymath}
e_m^F(x,s_m,A)=\int_\Theta 1_A(F(\theta))e_m(x,s_m,d\theta)
\end{displaymath}
for $e=(e_m;m=1,2,\ldots)$. 
Set $e^F=(e_m^F;m=1,2,\ldots)$ and $\mathcal{M}^F=(M,e^F)$. 
Then $\mathcal{M}^F$ is a Monte Carlo procedure defined on 
$(X,\mathcal{X},P)$ on $(S,\Psi)$
and we call $\mathcal{M}^F$ a transform of $\mathcal{M}$. 
For example, if $\mathcal{M}=(M,E)$, then $\mathcal{M}^F=(M,E\circ F)$. 

Now we consider a localization of a transform $\mathcal{M}^F$. 
Let $\Theta\subset\mathbf{R}^p$ and $\Psi\subset\mathbf{R}^q$
be open sets. 
For $\tau\in\Psi$, 
we define a scaling $n^{1/2}(\tau-\hat{\tau}_n)$ where 
$\hat{\tau}_n=F(\hat{\theta}_n)$. 
We write
$\mathcal{M}_n^{F*}=(M_n,e_n^{F*})$ and $\Pi_n^{F*}$ for the scaling of 
$\mathcal{M}_n^F=(M_n,e_n^F)$ and $\Pi_n^F$ with respectively, that is, 
\begin{equation}\label{Fs}
\left\{\begin{array}{c}
\Pi^{F*}_n(x_n,A):=\int 1_A(n^{1/2}(F(\theta)-F(\hat{\theta}_n)))\Pi_n(x_n,d\theta),\\ 
e_{n,m}^{F*}(x_n,s_m,A):=\int 1_A(n^{1/2}(F(\theta)-F(\hat{\theta}_n)))
e_{n,m}(x_n,s_m,d\theta)
\end{array}\right.
\end{equation}
where $e_n^{F*}=(e_{n,m}^{F*};m=1,2,\ldots)$.
We say  
$(\mathcal{M}_n^F;n=1,2,\ldots)$ is locally consistent
to $(\Pi_n^F;n=1,2,\ldots)$ 
if 
$(\mathcal{M}_n^{F*};n=1,2,\ldots)$ is consistent
to $(\Pi_n^{F*};n=1,2,\ldots)$. 
The following lemma states that 
$(\mathcal{M}_n^F;n=1,2,\ldots)$ is locally consistent
if
$(\mathcal{M}_n;n=1,2,\ldots)$ is. 

\begin{lemma}\label{coneq}
Let
$F :\Theta\rightarrow\Psi$ be
$C^1$ map except a compact set $N$ of $\Theta$. 
Assume $P_n(\hat{\theta}_n\in N)\rightarrow 0$ and both the law of $\hat{\theta}_n$
and $\int_{X_n}P_n(dx_n)\Pi_n^*(x_n,\cdot)$ are tight. 
Then $(\mathcal{M}_n^F;n=1,2,\ldots)$ is local consistent to $(\Pi_n^F;n=1,2,\ldots)$ if $(\mathcal{M}_n;n=1,2,\ldots)$ is local consistent to 
$(\Pi_n;n=1,2,\ldots)$.
\end{lemma}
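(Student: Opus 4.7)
The plan is to exploit the $C^1$ hypothesis by replacing $F$ by its derivative at $\hat\theta_n$. Changing variables $u=n^{1/2}(\theta-\hat\theta_n)$ in~(\ref{Fs}) rewrites both localised objects as pushforwards,
\[
\Pi_n^{F*}(x_n,\cdot)=(\phi_n)_*\Pi_n^*(x_n,\cdot),\qquad e_{n,m}^{F*}(x_n,s_m,\cdot)=(\phi_n)_*e_{n,m}^*(x_n,s_m,\cdot),
\]
with $\phi_n(u):=n^{1/2}(F(\hat\theta_n+n^{-1/2}u)-F(\hat\theta_n))$. Write $A_n:=DF(\hat\theta_n)$ for its linearisation. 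The idea is to bound the bounded-Lipschitz distance between the two $\phi_n$-pushforwards by a triangle inequality through the intermediate pair $(A_n)_*\Pi_n^*$ and $(A_n)_*e_{n,m}^*$, so that the outer two comparisons are controlled by how close $\phi_n$ is to $A_n$ and the middle one is controlled by the assumed local consistency of $\mathcal{M}_n$ together with the Lipschitz constant of $A_n$.

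Given $\epsilon>0$, the tightness of the law of $\hat\theta_n$ together with $P_n(\hat\theta_n\in N)\to 0$ lets me pick a compact $K\subset\Theta\setminus N$ with $P_n(\hat\theta_n\in K)\ge 1-\epsilon$ eventually, and on a compact neighbourhood $K'\supset K$ with $K'\cap N=\emptyset$, $DF$ is continuous, bounded by some $L$, and uniformly continuous with modulus $\omega$. Tightness of $\int P_n(dx_n)\Pi_n^*(x_n,\cdot)$ supplies $R$ with $\int P_n(dx_n)\Pi_n^*(x_n,\{|u|>R\})<\epsilon$, and using the $1$-Lipschitz test function $(|u|-R)_+\wedge 1$ together with the hypothesis $R_{m_n}(\mathcal{M}_n,\Pi_n)\to 0$, the same bulk bound (with a slightly larger radius) transfers in $P_n\otimes M_n$-average to $e_{n,m_n}^*$. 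On $\{\hat\theta_n\in K\}$ and for $|u|\le R$, the mean-value theorem on $K'$ gives
\[
|\phi_n(u)-A_n u|\le R\,\omega\bigl(Rn^{-1/2}\bigr)=:\delta_n\to 0.
\]
For any probability measure $\mu$ with $\mu(\{|u|>R\})\le\epsilon$ and any $1$-Lipschitz $f$ with $\|f\|_\infty\le 1$, $|\int(f\circ\phi_n-f\circ A_n)\,d\mu|\le\delta_n+2\epsilon$, so $w((\phi_n)_*\mu,(A_n)_*\mu)\le\delta_n+2\epsilon$; this applies to $\Pi_n^*$ directly and to $e_{n,m_n}^*$ after the transfer above. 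Since $f\circ A_n$ is $\max(1,L)$-Lipschitz on the good event, $w((A_n)_*e_{n,m_n}^*,(A_n)_*\Pi_n^*)\le\max(1,L)\,w(e_{n,m_n}^*,\Pi_n^*)$. Combining via the triangle inequality, integrating against $P_n\otimes M_n$ on the good event, and using that the BL metric is bounded by $1$ off the good event, I obtain
\[
R_{m_n}(\mathcal{M}_n^F,\Pi_n^F)\le 2(\delta_n+2\epsilon)+\max(1,L)\,R_{m_n}(\mathcal{M}_n,\Pi_n)+C\epsilon.
\]
Letting $n\to\infty$ and then $\epsilon\to 0$ yields $R_{m_n}(\mathcal{M}_n^F,\Pi_n^F)\to 0$, as required.

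The main obstacle is transferring the bulk-tightness bound from the (averaged) posterior $\Pi_n^*$ to the random empirical $e_{n,m_n}^*$, which is not directly assumed. It follows from the BL-closeness of the two in mean via the elementary inequality $\mu(\{|u|>R+1\})\le\int (|u|-R)_+\wedge 1\,d\mu\le\nu(\{|u|>R\})+w(\mu,\nu)$; everything else reduces to a standard Taylor expansion and careful bookkeeping of the events $\{\hat\theta_n\in K\}$ and $\{|u|\le R\}$.
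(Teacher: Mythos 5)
Your proof is correct and follows essentially the same route as the paper's: localize to a compact set away from $N$ using tightness of $\hat{\theta}_n$, restrict to a bounded region using tightness of the localized posterior, linearize $F$ at $\hat{\theta}_n$, and bound the transformed bounded-Lipschitz risk by $|\partial F(\hat{\theta}_n)|$ times the original one plus vanishing terms. Your explicit transfer of the bulk-tightness bound from $\Pi_n^*$ to $e_{n,m_n}^*$ via the BL inequality is a detail the paper asserts only implicitly, but it is the same argument.
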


\begin{proof}
Let $B_r(u)=\{v\in\mathbf{R}^p; d(u,v)<r\}$. 
Fix $m_n\rightarrow\infty$
and $r_n\rightarrow0$ such that $r_n n^{1/2}\rightarrow\infty$. 

We first remark that for any $\epsilon>0$, there exists $\delta>0$ such that 
for $N^\delta=\{x\in\Theta; d(x,y)<\delta,y\in N\}$, 
there exists a compact set $K\subset (N^\delta)^c$ such that
\begin{displaymath}
\limsup_{n\rightarrow\infty}P_n(\hat{\theta}_n\in K^c)\le \epsilon. 
\end{displaymath}

Second we consider
\begin{equation}\nonumber
\left\{\begin{array}{c}
\tilde{\Pi}_n(A):=\int_{X_n} P_n(dx_n)\Pi_n(x_n,A),\\ 
\tilde{e}_{n,m}(A):=\int_{X_n,S^{\mathbf{N}_0}} 
P_n(dx_n)M_n(x_n,ds_\infty)e_{n,m}(x_n,s_m,A)
\end{array}\right. .
\end{equation}
Then by assumption, both 
$\tilde{\Pi}_n(B_{r_n}(\hat{\theta}_n)^c)$
and 
$\tilde{e}_{n,m_n}(B_{r_n}(\hat{\theta}_n)^c)$
tends to $0$.

By the differentiability of $F$, we have
\begin{displaymath}
\sup_{u\in K,h\in B_{r_n}(0)}|\frac{F(u+h)-F(u)}{|h|}-\frac{\partial F}{\partial x}(u)^Th|\rightarrow 0
\end{displaymath}
and hence
we can replace $n^{1/2}(F(\theta)-F(\hat{\theta}_n))$
of (\ref{Fs})
by $\partial F(\hat{\theta}_n)^Tn^{1/2}(\theta-\hat{\theta}_n)$
if $\hat{\theta}_n\in K$.

Using this replacement, we have
\begin{displaymath}
W_{m_n}(\mathcal{M}_n^{F*}(x_n),\Pi_n^{F*}(x_n))\le 
o_{P_n}(1)+|\frac{\partial F}{\partial x}(\hat{\theta}_n)|
W_{m_n}(\mathcal{M}_n^*(x_n),\Pi_n^*(x_n))
\end{displaymath}
which means local consistency of $(\mathcal{M}_n^F;n=1,2,\ldots)$
to $(\Pi_n^F;n=1,2,\ldots)$. 
\end{proof}

Roughly speaking, this lemma says that, if 
$\mathcal{M}_n$ is ``equivalent'' to 
$\mathcal{N}_n^F$ for some $F$
and $(\mathcal{N}_n,n=1,2,\ldots)$ is locally consistent, then 
$(\mathcal{M}_n,n=1,2,\ldots)$ is also locally consistent.

We define minimal representation and equivalence of Monte Carlo procedure $(M,E)$. 

\begin{definition}[Minimal representation, equivalence]
Let $(M,E)$ be a Monte Carlo procedure for $E:S\rightarrow\Theta$. 
For a realization $s_\infty=(s(0),s(1),\ldots)$
of $M(x,\cdot)$,
we write 
$M^E(x,\cdot)$, for the law of 
$(E(s(0)),E(s(1)),\ldots)$, that is, 
\begin{displaymath}
M^E(x,A)=\int_{S^{\mathbf{N}_0}}1_A((E(s(0)),E(s(1)),\ldots))M(x,ds_\infty).
\end{displaymath}
 Then we call 
$(M^E,\mathrm{id})$ a minimal representation of $\mathcal{M}=(M,E)$. 
If two Monte Carlo procedures $\mathcal{M},\mathcal{N}$ have the same minimal representation, we call $\mathcal{M},\mathcal{N}$ equivalent. 
\end{definition}

Note that even if 
$\mathcal{M}$ is Markov chain Monte Carlo procedure, 
a minimal representation may lose Markov property of 
original Monte Carlo procedure.

\subsection{Degeneracy and local degeneracy}
We define degeneracy of Monte Carlo procedure. 
Let
\begin{displaymath}
W_m'(\mathcal{M}_n(x_n,\cdot)):=\int_{s_\infty\in S_n^{\mathbf{N}_0}}w_m(e_{n,m}(s_m,\cdot),e_{n,1}(s_1,\cdot))M_n(x_n,ds_\infty),
\end{displaymath}
and
\begin{displaymath}
R_m'(\mathcal{M}_n):=\int_{x_n\in X_n}W_m'(\mathcal{M}_n(x_n,\cdot))P_n(dx_n).
\end{displaymath}

Note that for bounded Lipshitz metric $w$ for probability measures on a metric space $(D,d)$, 
\begin{equation}\label{blfact}
w(\mu,\delta_x)=\int d(x,y)\mu(dy)
\end{equation}
where $\delta_x$ is a Dirac measure. 

\begin{definition}
A sequence of Monte Carlo procedure $(\mathcal{M}_n;n=1,2,\ldots)$ on 
$(X_n,\mathcal{X}_n,P_n)$ on $(S_n,\Theta)$ is said to be degenerate if 
$R_m'(\mathcal{M}_n)\rightarrow 0$ for any $m\in\mathbf{N}$. 
If $(\mathcal{M}_n^*;n=1,2,\ldots)$ is degenerate, we call $(\mathcal{M}_n;n=1,2,\ldots)$ locally degenerate. 
\end{definition}

\begin{remark}
In fact, as a measure of poor behavior, degeneracy is sometimes too wide. 
Roughly speaking, among degenerate Monte Carlo procedures, there are 
relatively good one and bad one. Even if Monte Carlo procedure is
degenerate, sometimes it tends to $\Pi_n$ in a slower rate. This convergence property is called
a weak consistency by \cite{Kamatani11} although the terminology in that paper is slightly different from the current one.  
We can distinguish degenerate Monte Carlo procedures by the rate. 
\end{remark}

The following is an example for non-random Markov chain Monte Carlo procedure. 
Let $B_r(x)=\{y\in\mathbf{R}^p; |x-y|<r\}$. 

\begin{example}
Let $\Theta_n= S_n\equiv\Theta=\mathbf{R}^p$. 
Let $\mathcal{M}_n=(M_n,\mathrm{id})$ be a non-random Markov chain Monte Carlo procedure 
on $\Theta$ where $M_n$ is generated by $(\mu_n, K_n)$.
Let $R_n$ be a probability transition kernel from $\Theta$ to itself and 
$A_n:\Theta\rightarrow (0,1)$ (open interval) be a measurable map. 
Assume $K_n(x,dy)=A_n(x)R_n(x,dy)+(1-A_n(x))\delta_x(dy)$. 

We can show that
 if $(\mu_n;n=1,2,\ldots)$ is tight and (a) if $\sup_{x\in K}A_n(x)\rightarrow 0$
for any compact set $K$, or  (b) if $\sup_{x\in K}R_n(x,B_\epsilon(x)^c)\rightarrow 0$
for any compact set $K$ and $\epsilon>0$, then $(\mathcal{M}_n;n=1,2,\ldots)$ is degenerate. 

To show (a), fix $m\in\mathbf{N}$ and $\epsilon>0$. 
By assumption, there exists a compact set $K$ such that
\begin{displaymath}
\limsup_{n\rightarrow\infty}\mu_n(K^c)\le \epsilon,\ \sup_{x\in K}A_n(x)\rightarrow 0.
\end{displaymath}
Let $E_m:=\{\theta_\infty; \theta(0)=\theta(1)=\cdots=\theta(m-1)\}$
which is an event without any jump in first $m$ steps. 
Then 
\begin{displaymath}
\limsup_{n\rightarrow\infty}M_n(E_m^c)\le 1-\liminf_{n\rightarrow\infty}\int_{x\in K}(1-A_n(x))^m\mu(dx)\le 1-\epsilon. 
\end{displaymath}
On the event $E_m$, $w(e_m(\theta_m),e_1(\theta_1))=0$
where $e_m(\theta_m)=m^{-1}\sum_{i=0}^{m-1}\delta_{\theta(i)}$. 
Hence $\limsup_{n\rightarrow\infty}R'_m(\mathcal{M}_n)\le \epsilon$ which proves 
the first claim. 

To show (b), as above, fix $m\in\mathbf{N}$ and $\epsilon>0$.
Let
$E_m:=\{\theta_\infty; d(\theta(i),\theta(i+1))<\epsilon/2m\ (i=0,\ldots, m-2)\}$
be the event which does not move far from initial point $\theta(0)$ in first 
$m$ steps. 
By assumption, there exists a compact set $K$ such that
\begin{displaymath}
\limsup_{n\rightarrow\infty}\mu_n(K^c)\le \epsilon/2,\ \sup_{x\in K^{\epsilon/2}}R_n(x,B_{\epsilon/2m}(x))\rightarrow 0
\end{displaymath}
where $K^\epsilon=\{x\in\Theta;\exists y\in K,\ \mathrm{s.t.}\ d(x,y)<\epsilon\}$. 
Then $\limsup_{n\rightarrow\infty}M(E_m^c)\le\epsilon/2$ and 
on the event $E_m$, 
\begin{displaymath}
w(e_m(\theta_\infty),e_1(\theta_\infty))\le m^{-1}\sum_{i=1}^{m-1}d(\theta(0),\theta(i))
\le \sum_{i=0}^{m-2}d(\theta(i),\theta(i+1))<\epsilon/2.
\end{displaymath}
Hence $\limsup_{n\rightarrow\infty}R'_m(\mathcal{M}_n)\le \epsilon$ which proves 
the second claim. 
\end{example}

A sequence of consistent Monte Carlo procedures can be degenerate. 
However it is very spacial case. 
In particular, we have the following proposition. We call a sequence of probability transition kernel
$\Pi_n$ from $X_n$ to $\Theta_n$ degenerate if there exists a measurable map
$\hat{\theta}_n:X_n\rightarrow\Theta_n$ such that
\begin{displaymath}
\lim_{n\rightarrow\infty}\int_{x_n\in X_n} w_n(\Pi_n(x_n,\cdot), \delta_{\hat{\theta}_n(x_n)})P_n(dx_n)=0.
\end{displaymath}
If its localization $\Pi_n^*$ is degenerate, we call $\Pi_n$ locally degenerate. 

\begin{proposition}\label{degpro2}
Let $S_n=\Theta_n$ and 
let $(\mathcal{M}_n=(M_n,\mathrm{id});n=1,2,\ldots)$ be consistent to $(\Pi_n;n=1,2,\ldots)$
and also degenerate. Then $(\Pi_n;n=1,2,\ldots)$ is degenerate. 
\end{proposition}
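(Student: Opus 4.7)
The plan is to exploit the interplay between consistency (which forces $e_{n,m_n}$ close to $\Pi_n$) and degeneracy (which forces $e_{n,m_n}$ close to $e_{n,1}=\delta_{\theta(0)}$), conclude that $\Pi_n$ itself is pinned near a random Dirac mass, and then replace that random mass by a measurable centering via a selection argument.

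First I would run a standard diagonal argument: since $R'_m(\mathcal{M}_n)\rightarrow 0$ as $n\rightarrow\infty$ for every fixed $m$, one can find $m_n\rightarrow\infty$ along which $R'_{m_n}(\mathcal{M}_n)\rightarrow 0$, and consistency then gives $R_{m_n}(\mathcal{M}_n,\Pi_n)\rightarrow 0$ automatically. Because $S_n=\Theta_n$ and the map $E$ in the abbreviation $(M_n,\mathrm{id})$ is the identity, $e_{n,1}(s_1,\cdot)=\delta_{\theta(0)}$, so the triangle inequality applied to the bounded Lipshitz metric $w$ yields
\[
w(\Pi_n(x_n,\cdot),\delta_{\theta(0)})\le w(\Pi_n(x_n,\cdot),e_{n,m_n}(s_{m_n},\cdot))+w(e_{n,m_n}(s_{m_n},\cdot),e_{n,1}(s_1,\cdot)).
\]
Integrating against $M_n(x_n,ds_\infty)P_n(dx_n)$ and writing $\nu_n(x_n,\cdot)$ for the marginal law of $\theta(0)$ under $M_n(x_n,\cdot)$ produces
\[
\int_{X_n} P_n(dx_n)\int_{\Theta_n}\nu_n(x_n,d\theta)\,w(\Pi_n(x_n,\cdot),\delta_\theta)\longrightarrow 0. \qquad(\star)
\]

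Next I would take $\hat{\theta}_n(x_n)$ to be a measurable near-minimiser of $\theta\mapsto w(\Pi_n(x_n,\cdot),\delta_\theta)$. That map is measurable in $x_n$ and continuous in $\theta$ on the Polish space $\Theta_n$, so a standard selection theorem (Kuratowski--Ryll-Nardzewski applied to the closed-valued, measurable-graph correspondence $x_n\mapsto\{\theta:w(\Pi_n(x_n,\cdot),\delta_\theta)\le\inf_\theta w(\Pi_n(x_n,\cdot),\delta_\theta)+1/n\}$) supplies a measurable $\hat{\theta}_n$ with
\[
w(\Pi_n(x_n,\cdot),\delta_{\hat{\theta}_n(x_n)})\le\inf_{\theta}w(\Pi_n(x_n,\cdot),\delta_\theta)+\tfrac{1}{n}\le\int\nu_n(x_n,d\theta)\,w(\Pi_n(x_n,\cdot),\delta_\theta)+\tfrac{1}{n}.
\]
Integrating this bound against $P_n$ and invoking $(\star)$ closes the argument.

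The main obstacle is the clean execution of the measurable selection; a palatable alternative is to take $\hat{\theta}_n(x_n)$ to be the central value of $\Pi_n(x_n,\cdot)$, whose existence, uniqueness, measurability, and continuity the paper has already recorded. In that variant one loses the near-optimality inequality and must instead use a two-step triangle argument, first extracting from $(\star)$ that the $\nu_n\otimes\nu_n$-average pairwise distance vanishes and then transferring that concentration onto $\Pi_n$ itself, and thence onto its central value.
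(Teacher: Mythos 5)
Your proposal is correct and follows essentially the same route as the paper: both extract a diagonal sequence $m_n\to\infty$, combine consistency and degeneracy through the triangle inequality to get $\int_{X_n}P_n(dx_n)\int w\bigl(\Pi_n(x_n,\cdot),\delta_{s(0)}\bigr)\,M_n(x_n,ds_\infty)\to 0$ (your $(\star)$ is exactly the paper's display (\ref{degenpi})), and then replace $s(0)$ by a measurable centering. The only, harmless, difference is the final selection step: the paper detours through the $\mu_n\otimes\mu_n$-average pairwise distance $\int\int w(\delta_s,\delta_t)\mu_n(ds)\mu_n(dt)$ and picks $\hat{\theta}_n$ nearly minimizing $\int w(\delta_s,\delta_{\hat{\theta}_n})\mu_n(ds)$, whereas you near-minimize $\theta\mapsto w(\Pi_n(x_n,\cdot),\delta_\theta)$ directly and bound the infimum by the $\nu_n$-average, which is marginally more direct.
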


\begin{proof}
By degeneracy, there exists $m_n\rightarrow\infty$ such that 
$R'_{m_n}(\mathcal{M}_n)$ tends to $0$. 
Then
\begin{equation}\label{degenpi}
\int_{x_n\in X_n}\int_{s_\infty\in S_n^{\mathbf{N}_0}} w_n(\Pi_n(x_n,\cdot), \delta_{s(0)})M_n(x_n,ds_\infty)P_n(dx_n)\rightarrow 0
\end{equation}
since the left hand side is bounded by 
$R_{m_n}(\mathcal{M}_n,\Pi_n)+R_{m_n}(\mathcal{M}_n,\delta_{s(0)})$ where both two terms tend to $0$. 
Write marginal distribution of $M_n(x_n,\cdot)$ on $s(0)$ by $\mu_n(x_n,\cdot)$, 
that is, $M_n(x_n,A\times S_n\times S_n\times\cdots)=\mu_n(x_n, A)$. 
Then the above convergence can be rewritten by 
\begin{displaymath}
\int_{x_n\in X_n}\int_{s\in S_n} w_n(\Pi_n(x_n,\cdot), \delta_s)\mu_n(x_n,ds)P_n(dx_n)\rightarrow 0.
\end{displaymath}
By triangular inequality,  $w_n(\delta_s, \delta_t)$ is bounded by 
$w_n(\Pi_n(x_n,\cdot), \delta_s)$ plus $w_n(\Pi_n(x_n,\cdot), \delta_t)$. 
Hence we have
\begin{displaymath}
\int_{x_n\in X_n}\int_{s,t\in S_n} w_n(\delta_s, \delta_t)\mu_n(x_n,ds)\mu_n(x_n,dt)P_n(dx_n)\rightarrow 0.
\end{displaymath}
For each $x_n$, we can find $\hat{\theta}_n(x_n)$ to be 
\begin{displaymath}
\int_{s\in S}w(\delta_s,\delta_{\hat{\theta}_n(x_n)})\mu_n(x_n,ds)\le \int_{s,t\in S_n} w_n(\delta_s, \delta_t)\mu_n(x_n,ds)\mu_n(x_n,dt)
\end{displaymath}
and measurable. 
Therefore we have 
$\int_{x_n}\int_{s\in S}w(\delta_s,\delta_{\hat{\theta}_n(x_n)})\mu_n(x_n,ds)P_n(dx_n)\rightarrow 0$. 
Hence by triangular inequality, we can replace $s(0)$ in (\ref{degenpi}) by $\hat{\theta}_n(x_n)$
which completes the proof. 
\end{proof}

For stationary case, the following proposition is useful to prove degeneracy. 

\begin{proposition}\label{degpro1}
Let  $F_n:S_n\rightarrow\Theta_n$. 
Let $(\mathcal{M}_n=(M_n,F_n);n=1,2,\ldots)$ be stationary. 
Then $(\mathcal{M}_n;n=1,2,\ldots)$ is degenerate if and only if 
\begin{equation}\label{degenstatio}
\int_{x_n\in X_n}\int_{s_\infty\in S^{\mathbf{N}_0}}d^n(F_n(s(0)),F_n(s(1)))M_n(x_n,ds_\infty)P_n(dx_n)
\rightarrow 0. 
\end{equation}
\end{proposition}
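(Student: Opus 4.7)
The proof proceeds by a direct computation exploiting (\ref{blfact}) together with stationarity. Write $F = F_n$, $M = M_n$, $P = P_n$, $d = d^n$ for brevity, and abbreviate
\begin{displaymath}
D_n := \int_{X_n}\int_{S_n^{\mathbf{N}_0}} d(F(s(0)),F(s(1)))\,M(x_n,ds_\infty)\,P(dx_n).
\end{displaymath}
Since $e_{n,m}(s_m,\cdot) = m^{-1}\sum_{i=0}^{m-1}\delta_{F(s(i))}$ and $e_{n,1}(s_1,\cdot) = \delta_{F(s(0))}$, the identity (\ref{blfact}) yields
\begin{displaymath}
w_m(e_{n,m}(s_m,\cdot),e_{n,1}(s_1,\cdot)) \;=\; \frac{1}{m}\sum_{i=0}^{m-1} d(F(s(0)),F(s(i))).
\end{displaymath}
This is the representation I would keep in view throughout both directions.

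For the ``only if'' direction, I would simply specialize to $m=2$: the above display gives $w_2 = \tfrac{1}{2}d(F(s(0)),F(s(1)))$, so $R'_2(\mathcal{M}_n) = \tfrac{1}{2}D_n$. Degeneracy forces $R'_m(\mathcal{M}_n)\to 0$ for every fixed $m$, in particular for $m=2$, hence $D_n\to 0$.

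For the ``if'' direction, the plan is to bound $d(F(s(0)),F(s(i)))$ by the telescoping sum $\sum_{j=0}^{i-1}d(F(s(j)),F(s(j+1)))$ via the triangle inequality. Then stationarity of $M(x_n,\cdot)$ (which holds for $P_n$-a.s.\ $x_n$) means each summand has the same $M(x_n,\cdot)$-expectation as $d(F(s(0)),F(s(1)))$. Combining,
\begin{displaymath}
W'_m(\mathcal{M}_n(x_n,\cdot)) \;\le\; \frac{1}{m}\sum_{i=0}^{m-1} i \cdot \int d(F(s(0)),F(s(1)))\,M(x_n,ds_\infty) \;=\; \frac{m-1}{2}\int d(F(s(0)),F(s(1)))\,M(x_n,ds_\infty).
\end{displaymath}
Integrating over $P_n$ gives $R'_m(\mathcal{M}_n) \le \tfrac{m-1}{2} D_n$, so $D_n \to 0$ implies $R'_m(\mathcal{M}_n)\to 0$ for every fixed $m\in\mathbf{N}$, which is degeneracy.

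There is no real obstacle: the only subtlety is making sure to invoke stationarity only $P_n$-a.s.\ (rather than $P_n$-everywhere) when passing the identity of increment-expectations inside the integral against $P$, and relying on (\ref{blfact}) to turn $w_m$ against a Dirac mass into an explicit integral of $d$. Everything else is a triangle inequality and an arithmetic sum.
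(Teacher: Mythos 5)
Your proof is correct and follows essentially the same route as the paper: the ``only if'' direction is the $m=2$ specialization of $R'_m$, and the ``if'' direction rests on the telescoping triangle inequality combined with stationarity of the increments. The only difference is in execution: where the paper truncates on the high-probability event $E_m$ on which all increments are below $\epsilon/2m$ (controlling the complement via boundedness of $w_m$), you integrate the telescoped bound directly and arrive at the cleaner quantitative estimate $R'_m(\mathcal{M}_n)\le\frac{m-1}{2}D_n$, which establishes the same conclusion without introducing $\epsilon$.
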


\begin{proof}
Since $w_n(\delta_{F_n(s(0))},\delta_{F(s(1))})=d^n(F_n(s(0)),F_n(s(1)))$, 
the sufficiency is obvious by applying $m=2$ for the definition of degeneracy.
On the other hand, if (\ref{degenstatio}) holds, 
take 
$E_m:=\{s_\infty; d^n(F_n(s(i)),F_n(s(i+1)))<\epsilon/2\ (i=0,\ldots, m-2)\}$. 
For fixed $m\in\mathbf{N}$, 
by stationarity, 
$M_n(x_n,E_m^c)\le mM_n(x_n,E_1^c)$ and 
$\int_{x_n\in X_n}M_n(x_n,E_1^c)P_n(dx_n)\rightarrow 0$ by (\ref{degenstatio}). 
By triangular inequality, on the event $E_m$, $d^n(e_{n,m}(s_m), e_{n,1}(s_1))$ 
where $e_{n,m}(s_m)=m^{-1}\sum_{i=0}^{m-1}\delta_{F_n(s(i))}$ is bounded
by 
\begin{displaymath}
m^{-1}\sum_{i=1}^{m-1}d^n(F_n(s(0)),F_n(s(i)))\le\sum_{i=0}^{m-2}
d^n(F_n(s(i)),F_n(s(i+1)))<\epsilon.
\end{displaymath}
Hence $\limsup_{n\rightarrow\infty}R_m'(\mathcal{M}_n)\le\epsilon$ which proves the claim. 
\end{proof}

We consider local degeneracy of the sequence of standard Gibbs sampler defined in Section 6.1 of \cite{Kamatani10}. 

\begin{proposition}\label{onestepGibbs}
A sequence of a standard Gibbs sampler $(\overline{\mathcal{M}}_n;n=1,2,\ldots)$ is degenerate if and only if 
there exists a measurable function 
$\tilde{\theta}_n:X_n\times Y_n\rightarrow\Theta$
such that
\begin{equation}\label{degeq1}
\int w(P_n(d\theta|x_n,y_n),\delta_{\tilde{\theta}_n})P_n(dx_ndy_n)\rightarrow 0.
\end{equation}
Moreover, if $\Theta\subset\mathbf{R}^p$, 
$(\overline{\mathcal{M}}_n;n=1,2,\ldots)$
is locally degenerate under the scaling 
$\theta\mapsto n^{1/2}(\theta-\hat{\theta}_n(x_n))$
if and only if 
there exists $\tilde{\tau}_n:X_n\times Y_n\rightarrow\mathbf{R}^{p}$
\begin{displaymath}
\int w(P_n^*(d\theta|x_n,y_n),\delta_{\tilde{\tau}_n})P_n(dx_ndy_n)\rightarrow 0
\end{displaymath}
where $P_n^*(d\theta|x_n,y_n)$ is the localization of 
$P_n(d\theta|x_n,y_n)$.
\end{proposition}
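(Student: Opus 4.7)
My plan is to reduce both claims to Proposition \ref{degpro1} applied with $F_n(y,\theta)=\theta$, using that the standard Gibbs sampler $\overline{\mathcal M}_n$ is stationary (the initial law $\overline\Pi_n(x_n,\cdot)=P_n(dy_n,d\theta|x_n)$ is invariant under $\overline K_n$ by the two-stage Gibbs identity). Under this reduction, degeneracy of $\overline{\mathcal M}_n$ is equivalent to
\[
I_n:=\int\!\!\int d(\theta(0),\theta(1))\,\overline M_n(x_n,ds_\infty)P_n(dx_n)\longrightarrow 0.
\]

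The heart of the argument is identifying the joint law of $(x_n,\theta(0),y(1),\theta(1))$. Under $P_n(dx_n)\overline M_n(x_n,ds_\infty)$, the triple $(x_n,\theta(0),y(1))$ has law $P_n(dx_n,d\theta,dy_n)$: indeed $\theta(0)|x_n\sim P_n(d\theta|x_n)$ and then $y(1)|x_n,\theta(0)\sim P_n(dy_n|x_n,\theta(0))$. Rewriting this joint as $P_n(dx_n,dy_n)P_n(d\theta|x_n,y_n)$ shows $\theta(0)|x_n,y(1)\sim P_n(d\theta|x_n,y(1))$ with marginal $(x_n,y(1))\sim P_n(dx_n,dy_n)$. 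By construction of $\overline K_n$, $\theta(1)|x_n,y(1)\sim P_n(d\theta|x_n,y(1))$, independently of $\theta(0)$. Hence, conditional on $(x_n,y(1))$, the pair $(\theta(0),\theta(1))$ is iid from $P_n(d\theta|x_n,y(1))$, and
\[
I_n=\int\!\!\int\!\!\int d(\theta,\theta')\,P_n(d\theta|x_n,y_n)P_n(d\theta'|x_n,y_n)\,P_n(dx_n,dy_n).
\]

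Sufficiency of (\ref{degeq1}) is then immediate: the triangle inequality $d(\theta,\theta')\le d(\theta,\tilde\theta_n)+d(\theta',\tilde\theta_n)$ combined with (\ref{blfact}) gives $I_n\le 2\int w(P_n(d\theta|x_n,y_n),\delta_{\tilde\theta_n})P_n(dx_ndy_n)\to 0$. For necessity I mirror the end of the proof of Proposition \ref{degpro2}: for each $(x_n,y_n)$ one can pick a measurable $\tilde\theta_n(x_n,y_n)$ with
\[
\int d(\theta,\tilde\theta_n)P_n(d\theta|x_n,y_n)\le\int\!\!\int d(\theta,\theta')P_n(d\theta|x_n,y_n)P_n(d\theta'|x_n,y_n),
\]
and integration in $(x_n,y_n)$ converts $I_n\to 0$ into (\ref{degeq1}).

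The local statement follows by rerunning the same argument through the bijection $\theta\mapsto\tau=n^{1/2}(\theta-\hat\theta_n(x_n))$: the Markov chain $\overline{\mathcal M}_n^*$ is the original sampler reparametrized, hence still stationary, and the one-step expectation becomes $\int\!\!\int\!\!\int d(\tau,\tau')P_n^*(d\theta|x_n,y_n)P_n^*(d\theta'|x_n,y_n)P_n(dx_n,dy_n)$, yielding the analogous characterization with $\tilde\tau_n$ in place of $\tilde\theta_n$. The only delicate point in the whole proof is the measurable selection of $\tilde\theta_n$ (equivalently $\tilde\tau_n$), but this is precisely the issue already resolved in Proposition \ref{degpro2}; the rest is bookkeeping.
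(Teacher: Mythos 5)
Your proof is correct and follows essentially the same route as the paper: both reduce to Proposition \ref{degpro1} via stationarity, identify that conditionally on $(x_n,y_n)$ the consecutive draws $\theta(0),\theta(1)$ are i.i.d.\ from $P_n(d\theta|x_n,y_n)$, and handle necessity by the measurable-selection argument from Proposition \ref{degpro2} and sufficiency by the triangle inequality. Your explicit rewriting of the one-step expectation as the symmetric double integral $\int\!\int\!\int d(\theta,\theta')P_n(d\theta|x_n,y_n)P_n(d\theta'|x_n,y_n)P_n(dx_n,dy_n)$ is only a cleaner packaging of what the paper does implicitly.
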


\begin{proof}
Assume that $(\mathcal{M}_n;n=1,2,\ldots)$ is degenerate. 
Then by Proposition \ref{degpro1} and 
(\ref{blfact}), 
\begin{displaymath}
\int_{x_n,\theta(0),y_n,\theta(1)} d(\theta(0),\theta(1))P_n(d\theta(1)|x_n,y_n)P_n(dy_n|x_n,\theta(0))P_n(d\theta(0)|x_n)P_n(dx_n)
\end{displaymath}
tends to $0$.
Then as in the proof of 
Proposition \ref{degpro2}, there exists a measurable function 
$\tilde{\theta}_n:X_n\times Y_n\rightarrow \Theta$
such that
 \begin{displaymath}
\int_{x_n,y_n,\theta(1)} d(\tilde{\theta}_n(x_n,y_n),\theta(1))P_n(d\theta(1)|x_n,y_n)P_n(dx_ndy_n)
\end{displaymath}
tends to $0$. This proves (\ref{degeq1}) by (\ref{blfact}).

On the other hand, if  (\ref{degeq1}) holds. Then by triangular inequality, 
\begin{displaymath}
w(\delta_{\theta(0)},\delta_{\theta(1)})=d(\theta(0),\theta(1))
\le d(\theta(0),\tilde{\theta}_n(x_n,y_n))+d(\tilde{\theta}_n(x_n,y_n),\theta(1))
\end{displaymath}
and by stationarity, the two terms on the right hand side have the same law. 
We have
\begin{displaymath}
\int_{\theta(0)} d(\theta(0),\tilde{\theta}_n(x_n,y_n))P_n(d\theta(0)|x_n,y_n)=w(P_n(d\theta|x_n,y_n), \delta_{\tilde{\theta}_n(x_n,y_n)}),
\end{displaymath}
and the integral of the right hand side by $P_n(dy_n|x_n)P(dx_n)$
tends to $0$. Hence 
\begin{displaymath}
\int_{x_n,s_\infty}w(\delta_{\theta(0)},\delta_{\theta(1)})\overline{M}_n(x_n,ds_\infty)P_n(dx_n)\rightarrow 0
\end{displaymath}
and degeneracy follows by Proposition \ref{degpro1}. 
The proof for local degeneracy is  the same
replacing sequence $\theta(0),\theta(1),\ldots$ by 
$F_n(x_n,\theta(0)),F_n(x_n,\theta(1)),\ldots$
where $F_n(x_n,\theta)=n^{1/2}(\theta-\hat{\theta}_n(x_n))$. 
\end{proof}

\begin{remark}
By the proposition, it is easy to show that when standard Gibbs sampler is (locally) degenerate, then standard multi-step Gibbs sampler (not defined here) is also (locally) degenerate. This is another validation for the ordering of transition kernels of \cite{LWK}. We could not establish similar relation for local consistency. 
\end{remark}

For local degeneracy, we have the following. We omit the proof since it is the same for 
local consistency. 

\begin{lemma}\label{degeq}
Let
$F :\Theta\rightarrow\Psi$ be
$C^1$ map except a compact set $N$ of $\Theta$. 
Assume $P_n(\hat{\theta}_n\in N)\rightarrow 0$ and  the law of $\hat{\theta}_n$
is tight. 
Then $(\mathcal{M}_n^F;n=1,2,\ldots)$ is local degenerate if $(\mathcal{M}_n;n=1,2,\ldots)$ is local degenerate.
\end{lemma}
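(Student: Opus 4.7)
The plan is to mirror the proof of Lemma~\ref{coneq}, adapting it to compare $e_{n,m}^{F*}$ with $e_{n,1}^{F*}$ rather than with $\Pi_n^{F*}$. Fix $m\in\mathbf{N}$ and $\varepsilon>0$; the goal is to show $\limsup_n R'_m(\mathcal{M}_n^{F*})\le O(\varepsilon)$. First I would use tightness of $\hat{\theta}_n$ together with $P_n(\hat{\theta}_n\in N)\to 0$ to pick $\delta>0$ and a compact $K\subset (N^{\delta})^c$ with $\limsup_n P_n(\hat{\theta}_n\notin K)\le \varepsilon$. Because the bounded Lipshitz metric $w$ is bounded by a universal constant, the contribution of the event $\{\hat{\theta}_n\notin K\}$ to $R'_m(\mathcal{M}_n^{F*})$ is $O(\varepsilon)$, so the work is confined to $\{\hat{\theta}_n\in K\}$, where $F$ is $C^1$ on a neighborhood of $\hat{\theta}_n$ and $|\partial F(\hat{\theta}_n)|$ is bounded by some $C_K<\infty$.

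Next, on $\{\hat{\theta}_n\in K\}$ I use the same uniform differentiability estimate that appears in the proof of Lemma~\ref{coneq}, namely
\begin{displaymath}
\sup_{u\in K,\,h\in B_{r_n}(0)}\left|\frac{F(u+h)-F(u)}{|h|}-\frac{\partial F}{\partial x}(u)^T h\right|\to 0
\end{displaymath}
for any $r_n\to 0$. Writing $\tau(i)=n^{1/2}(\theta(i)-\hat{\theta}_n)$, this lets me replace $n^{1/2}(F(\theta(i))-F(\hat{\theta}_n))$ by $\partial F(\hat{\theta}_n)^T\tau(i)$ up to an error of order $|\tau(i)-\tau(0)|\cdot o(1)$ when $|\tau(i)-\tau(0)|$ is not too large. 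The empirical measure $e_{n,m}^{F*}$ is therefore (modulo this error) the pushforward of $e_{n,m}^*$ under the linear map $A_n\tau:=\partial F(\hat{\theta}_n)^T\tau$, and similarly for $e_{n,1}^{F*}$.

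Since a linear map with operator norm $C_K$ contracts the bounded Lipshitz distance by at most $C_K$, the two displays above give, on $\{\hat{\theta}_n\in K\}$,
\begin{displaymath}
w\bigl(e_{n,m}^{F*}(x_n,s_m,\cdot),\,e_{n,1}^{F*}(x_n,s_1,\cdot)\bigr)\le C_K\,w\bigl(e_{n,m}^*(x_n,s_m,\cdot),\,e_{n,1}^*(x_n,s_1,\cdot)\bigr)+o_{P_n}(1).
\end{displaymath}
Integrating in $M_n$ and $P_n$ and using local degeneracy of $\mathcal{M}_n$, which says $R'_m(\mathcal{M}_n^*)\to 0$, yields $\limsup_n R'_m(\mathcal{M}_n^{F*})\le O(\varepsilon)$, and sending $\varepsilon\downarrow 0$ finishes the argument.

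The main obstacle is that, unlike Lemma~\ref{coneq}, we do not assume tightness of the localised empirical distribution, so the linearization error is not automatically uniform over the support of $e_{n,m}^*$. The key observation that rescues this is that degeneracy only requires comparing $e_{n,m}^*$ with $e_{n,1}^*$, so only the \emph{increments} $\tau(i)-\tau(0)$ need to be controlled; and local degeneracy of $\mathcal{M}_n$ together with the equivalent characterization used in Proposition~\ref{degpro1} already forces these increments to vanish in probability, which is exactly what is required to make $n^{1/2}\cdot o(|\theta(i)-\theta(0)|)=o(|\tau(i)-\tau(0)|)$ negligible regardless of where $\tau(0)$ sits.
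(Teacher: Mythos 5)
The paper omits the proof of this lemma, asserting it is ``the same'' as that of Lemma~\ref{coneq}; the question is therefore whether your argument correctly reproduces that mechanism. It does not: the ``key observation'' in your final paragraph is false. The uniform differentiability estimate is applied at the base point $u=\hat{\theta}_n\in K$, so the error in replacing $n^{1/2}(F(\theta(i))-F(\hat{\theta}_n))$ by $\partial F(\hat{\theta}_n)^T\tau(i)$ is of order $o(1)\cdot|\tau(i)|$, not $o(1)\cdot|\tau(i)-\tau(0)|$, and it is only valid at all when $|\theta(i)-\hat{\theta}_n|<r_n$. The quantity you actually need to control, $n^{1/2}\bigl(F(\theta(i))-F(\theta(0))\bigr)$, is by the mean value theorem approximately $\partial F(\theta(0))^T\bigl(\tau(i)-\tau(0)\bigr)$, with the derivative evaluated at $\theta(0)$ --- and $|\partial F(\theta(0))|$ is bounded only if $\theta(0)$ is confined to a compact set away from $N$. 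Smallness of the increments $\tau(i)-\tau(0)$, which is all that local degeneracy of $(\mathcal{M}_n)$ gives you via Proposition~\ref{degpro1}, says nothing about where $\theta(0)$ sits relative to $\hat{\theta}_n$.

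A concrete failure of your argument: take $\Theta=\Psi=\mathbf{R}$, $N=\emptyset$, $F(x)=x^2$, $\hat{\theta}_n\equiv 0$, and let $M_n$ be the deterministic chain $\theta(i)=n+i\,n^{-1}$. Then $\tau(i)-\tau(0)=i\,n^{-1/2}\to 0$, so $(\mathcal{M}_n)$ is locally degenerate and every hypothesis you invoke holds, yet $n^{1/2}\bigl(F(\theta(1))-F(\theta(0))\bigr)\approx 2n^{1/2}\to\infty$, so $(\mathcal{M}_n^F)$ is not locally degenerate. What is violated is tightness of the localized value $\tau(0)=n^{1/2}(\theta(0)-\hat{\theta}_n)$. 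The proof of Lemma~\ref{coneq} relies on exactly such a hypothesis (tightness of $\int_{X_n}P_n(dx_n)\Pi_n^*(x_n,\cdot)$, used to conclude $\tilde{e}_{n,m_n}(B_{r_n}(\hat{\theta}_n)^c)\to 0$), and the degeneracy version needs the analogous control; in the paper's applications it comes for free because the Gibbs samplers are stationary and Corollary~\ref{bvmthm} makes $\Pi_n^*$ tight. Your proof should restrict to the event $\{\theta(0)\in B_{r_n}(\hat{\theta}_n)\}$, bound $|\partial F|$ on $K^{r_n}$ there, and justify that the complementary event has vanishing probability from that tightness, rather than trying to dispense with it.
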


\section{Asymptotic properties for cumlative link model}\label{apc}

We consider a cumulative link model. 
Probability space $(X,\mathcal{X},P_X)$ is defined by $X=\mathbf{R}^p$ 
and $\mathcal{X}=\mathcal{B}(\mathbf{R}^p)$
with probability measure $P_X$ having a compact support. 
For $c\ge 2$,  $Y=\{1,2,\ldots, c\}$ and $\mathcal{Y}=2^Y$. 
Let $F$ be a cumulative distribution function on $\mathbf{R}$. 
When $c\ge 3$, a parameter $\theta=(\alpha,\beta)$ is constructed by 
 $\alpha=(\alpha^2,\ldots,\alpha^{c-1})$ 
such that $0<\alpha^2<\cdots<\alpha^{c-1}$ and $\beta\in\mathbf{R}^p$. 
When $c=2$, $\theta=\beta$. 
The model is
\begin{equation}\label{cummod}
x\sim P_X(dx),\ P(y\le j|x)=F(\alpha^j +\beta^T x)\ (j=1,2,\ldots, c)
\end{equation}
with dummy parameters $\alpha^0=-\infty, \alpha^1=0$ and $\alpha^c=+\infty$. 
The parameter space $\Theta\subset\mathbf{R}^{c-2}\times\mathbf{R}^{p}$ is
\begin{equation}
\Theta=\{(\alpha^2,\alpha^3,\ldots,\alpha^{c-1},\beta);0<\alpha^2<\cdots<\alpha^{c-1},\beta\in\mathbf{R}^p\}.
\end{equation}

This cumulative link model is useful for the analysis of ordered categorical data. 
See monographs such as \cite{McCullaghNelder89} and \cite{AA}. 
The analysis for Gibbs sampler for the model will be studied in the next section. 
Before that, in this section, we show the regularity of the model. First we check quadratic mean differentiability.

\subsection{Quadratic mean differentiability of the model}

We recall the definition of quadratic mean differentiability.
Let $(E,\mathcal{E})$ be a measurable space and $\{P(dx|\theta);\theta\in\Theta\}$
be a parametric family on the space where $\Theta\subset\mathbf{R}^p$
be an open set. 
Assume the existence of a $\sigma$-finite measure $\nu$ on $(E,\mathcal{E})$
having $P(dx|\theta)=p(x|\theta)\nu(dx)$ for a $\mathcal{E}$-measurable function 
$p(x|\theta)$ for any fixed $\theta\in\Theta$. 

\begin{definition}[Quadratic mean differentiability]
$P(dx|\theta)$ is called quadratic mean differentiable at $\theta\in\Theta$
if there exists a $\mathbf{R}^p$-valued $\mathcal{E}$-measurable function $\eta(x|\theta)$
such that
\begin{equation}
\int_{X}|\sqrt{p(x|\theta+h)}-\sqrt{p(x|\theta)}-h^T\eta(x|\theta)|^2\nu(dx)=o(|h|^2)
\end{equation}
for any $h\in\mathbf{R}^p$ such that $h\rightarrow 0$. 
\end{definition}

When $P(dx|\theta)$ is quadratic mean differentiable at $\theta\in\Theta$, 
a lot of properties such as local asymptotic normality of the likelihood ratio
hold with minimal assumptions. See monographs such as \cite{LeCamYang00} and \cite{LR}. 

Consider our model (\ref{cummod}). 
The measurable space $(E,\mathcal{E})$ is 
$(X\times Y,\mathcal{X}\otimes\mathcal{Y})$ for our model and 
$\sigma$-finite (in fact, finite) measure is defined by
\begin{displaymath}
\nu(dxdy)=P_X(dx)\sum_{i=1}^c\delta_i(dy).
\end{displaymath} 
For the choice of $\nu$, $p(xy|\theta)$ satisfying $P(dxdy|\theta)=p(xy|\theta)\nu(dxdy)$ is
 $p(xy|\theta)=F(\alpha^y+\beta^Tx)-F(\alpha^{y-1}+\beta^Tx)$. 
We assume the following bit strong regularity condition. 
For $x\sim P_X(dx)$, write the law of $\xi:=(1,x^T)^T$
by $P_\xi$.

\begin{assumption}\label{assqmd}
\begin{enumerate}
\item\label{assqmdfoF} $F(x)=\int_{-\infty}^xf(y)dy$ for a continuous strictly positive measurable function $f$. 
\item The support of $P_\xi$ is compact, which is not included in any subspace of dimension strictly lower than $p+1$. 
\end{enumerate}
\end{assumption}

\begin{proposition}
Under Assumption \ref{assqmd}, $P(dxdy|\theta)$ is quadratic mean differentiable at 
any 
$\theta$. 
\end{proposition}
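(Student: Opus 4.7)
The plan is to apply the standard sufficient condition for quadratic mean differentiability (for instance Lemma 7.6 of van der Vaart, \emph{Asymptotic Statistics}): it is enough to show that for $\nu$-a.e.\ $(x,y)$ the map $\theta \mapsto s(xy|\theta) := \sqrt{p(xy|\theta)}$ is continuously differentiable at the given $\theta_0 \in \Theta$, and that the Fisher information
\[
I(\theta) = \int (\nabla_\theta \log p)(\nabla_\theta \log p)^T\, p(xy|\theta)\, \nu(dxdy)
\]
exists, is finite, and is continuous at $\theta_0$. Under Assumption~\ref{assqmd} both conditions should reduce to a uniform positivity and boundedness argument.

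The first step is a uniform positivity bound. Fix $\theta_0 \in \Theta$ and choose a compact neighborhood $U$ of $\theta_0$ in $\Theta$; since $\mathrm{supp}(P_\xi)$ is compact, the set $J$ of values $\alpha^j + \beta^T x$ for $(\alpha,\beta) \in U$, $j\in\{1,\dots,c-1\}$, $x \in \mathrm{supp}(P_X)$ is a bounded interval in $\mathbf{R}$. Continuity and strict positivity of $f$ give $0 < m_f := \inf_J f \le \sup_J f =: M_f < \infty$, and combined with the strict ordering $\alpha^y > \alpha^{y-1}$ we obtain
\[
\delta := \inf_{\theta \in U}\, \inf_{y \in \{1,\dots,c\},\, x \in \mathrm{supp}(P_X)} p(xy|\theta) > 0.
\]
For the boundary cases one uses $F(\alpha^1+\beta^T x) = F(\beta^T x) > 0$ and $1 - F(\alpha^{c-1}+\beta^T x) > 0$, which by compactness are bounded away from zero on $U$.

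The second step is to compute the score. Differentiating $p(xy|\theta) = F(\alpha^y+\beta^T x) - F(\alpha^{y-1}+\beta^T x)$ gives $\partial p/\partial \beta = (f(\alpha^y+\beta^T x) - f(\alpha^{y-1}+\beta^T x))\, x$ and $\partial p/\partial \alpha^j = f(\alpha^j+\beta^T x)\,\mathbf{1}_{y=j} - f(\alpha^j+\beta^T x)\,\mathbf{1}_{y=j+1}$ (with the convention on $\alpha^0,\alpha^c$ making the boundary contributions vanish). Each of these is continuous in $\theta$ and bounded in modulus by a constant depending only on $M_f$ and the diameter of $\mathrm{supp}(P_\xi)$. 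Dividing by $p \ge \delta$, the score $\ell_\theta = \nabla_\theta \log p$ is continuous in $\theta$ and uniformly bounded on $U \times \mathrm{supp}(\nu)$. This immediately gives pointwise $C^1$-differentiability of $\sqrt{p}$ at $\theta_0$, with candidate derivative $\eta(xy|\theta_0) = \tfrac{1}{2}\ell_{\theta_0}(xy)\sqrt{p(xy|\theta_0)}$, and by dominated convergence the Fisher information is finite and continuous at $\theta_0$. Applying the cited sufficient condition concludes the proof.

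There is no genuine obstacle here; the argument is mostly bookkeeping. The one place requiring a little care is the boundary categories $y=1$ and $y=c$, where the formal derivative with respect to $\alpha^0$ or $\alpha^c$ is absent — but this is handled by the lower bound on $F(\beta^T x)$ and $1-F(\alpha^{c-1}+\beta^T x)$ established above, so that the score remains bounded on every category uniformly over the compact neighborhood $U$.
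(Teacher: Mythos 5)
Your proposal is correct and follows essentially the same route as the paper: the paper also invokes the standard sufficient condition (Theorem 12.2.2 of Lehmann--Romano, the same criterion as van der Vaart's Lemma 7.6), bounds the cell probabilities $p(xy|\theta)$ away from zero uniformly on compact parameter sets using the compact support of $P_\xi$, the separation of the $\alpha^j$, and the continuity and positivity of $f$, computes the same expressions for $\partial_{\alpha^j}p$ and $\partial_\beta p$ to bound the score uniformly, and concludes continuity of the Fisher information by bounded convergence. The handling of the boundary categories $y=1$ and $y=c$ via lower bounds on $F(\beta^Tx)$ and $1-F(\alpha^{c-1}+\beta^Tx)$ is likewise identical.
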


\begin{proof}
Take $\mathbf{R}^{c+p-2}$-valued measurable function $\eta(xy|\theta)$ to be
\begin{equation}\label{derivxy}
\eta(xy|\theta)=
\frac{\partial_\theta p(xy|\theta)}{2\sqrt{p(xy|\theta)}}
\end{equation}
which is well defined by Assumption \ref{assqmd}
and set $I(\theta)=(I_{ij}(\theta);i,j=1,2,\ldots, p+c-2)$ by
\begin{displaymath}
I(\theta)=4\int\eta(xy|\theta)\eta(xy|\theta)^T\nu(dxdy). 
\end{displaymath}
By Theorem 12.2.2 of \cite{LR}, if $I(\theta)$ is continuous, $P(dxdy|\theta)$
is quadratic mean differentiable. Since $\nu$ is a finite measure, 
it is sufficient to show the existence of $M$ for any 
bounded open set $A$, 
\begin{displaymath}
\sup_{\theta\in A}|\eta(xy|\theta)|\le M\ (x,y\in X\times Y).
\end{displaymath}
Take an open set $A$ to be its closure $\overline{A}\subset\Theta$ is compact. 
Take $\delta>0$ such that
\begin{displaymath}
\delta<\{\alpha^j-\alpha^{j-1};\theta\in A, j=2,\ldots, c-1\}.
\end{displaymath}
Then there exists $M_1$ such that
\begin{displaymath}
\sup_{j=1,2,\ldots, c-1}\{|x|, |\alpha^j+\beta^Tx|;\theta\in A,x\in\mathrm{supp}\ P_X\}\le M_1,
\end{displaymath}
and for the choice of $M_1$, by continuity and positivity of $f$, there exists constants $c_*, c^*\in (0,\infty)$
such that
\begin{displaymath}
c_*<f(x)<c^*\ (x\in [-M_1,M_1]).
\end{displaymath}
Then for $i=2,\ldots, c-1$, 
\begin{displaymath}
F(\alpha^i+\beta^Tx)-F(\alpha^{i-1}+\beta^Tx)
=\int_{\alpha^{j-1}+\beta^Tx}^{\alpha^j+\beta^Tx}f(x)\ge c_*\delta.
\end{displaymath}
For $i=1,c$, choosing $\delta>0$ to be small enough, 
$F(\beta^Tx)\ge F(-M_1)>c_*\delta$ and 
$1-F(\alpha^{c-1}+\beta^Tx)\ge 1-F(M_1)>c_*\delta$ are satisfied. 
Hence the denominator of the right hand side of (\ref{derivxy})
is uniformly bounded for $\theta\in A$. 

For the numerator of (\ref{derivxy}), we have
\begin{displaymath}
\partial_{\alpha^i} p(xy|\theta)=f(\alpha^i+\beta^Tx)1_{\{y=i\}}
-f(\alpha^i+\beta^Tx)1_{\{y=i+1\}}
\end{displaymath}
and 
\begin{displaymath}
\partial_{\beta} p(xy|\theta)=x(f(\alpha^y+\beta^Tx)-f(\alpha^{y-1}+\beta^Tx)).
\end{displaymath}
The absolute values of the above two terms are uniformly bounded by 
$c^*\max\{1,M_1\}$ for $\theta\in A$. 
Hence  (\ref{derivxy}) is uniformly bounded and the claim follows by 
 Theorem 12.2.2 of \cite{LR} by the bounded convergence theorem. 
\end{proof}

For $x_n=(x^1,\ldots, x^n)$ and $y_n=(y^1,\ldots, y^n)$, set
\begin{equation}\label{nor}
Z_n(x_n,y_n|\theta)=n^{-1/2}\sum_{i=1}^n\frac{2\eta(x^iy^i|\theta)}{\sqrt{p(x^iy^i|\theta)}}
\end{equation}
where $\eta(xy|\theta)$ is defined by (\ref{derivxy}). 
This function is called a normalized score function. 
By quadratic mean differentiability, the law of $Z_n(x_n,y_n|\theta)$
tends to $N(0,I(\theta))$. Moreover, if there exists a uniformly consistent, 
the posterior distribution tends to a normal distribution.
In the next subsection, we show the existence of the test. 

\subsection{Uniformly consistent test}\label{uct}

We prepare notations for the large sample setting.
Let 
\begin{displaymath}
(X_n\times Y_n,\mathcal{X}_n\times\mathcal{Y}_n,P_n(dx_ndy_n|\theta))
=
(X\times Y,\mathcal{X}\times\mathcal{Y},P(dxdy|\theta))^{\otimes n}
\end{displaymath}
and write its element $(x_n,y_n)$ where 
$x_n=(x^1,\ldots, x^n)$, $y_n=(y^1,\ldots, y^n)$. 
For $\theta_0\in\Theta$ and $\theta_0\in K\in\Xi$, 
a sequence of measurable functions $\psi_n:X_n\times Y_n\rightarrow [0,1]$
will be called uniformly consistent test for $\theta_0$ against $K^c$ if both
\begin{equation}
\int\psi_n(x_n,y_n)P_n(dx_ndy_n|\theta_0),\ 
\sup_{\theta\in K^c}\int 1-\psi_n(x_n,y_n)P_n(dx_ndy_n|\theta)
\end{equation}
tend to $0$ as $n\rightarrow\infty$. 
We prove the existence of the uniformly consistent test. 
The following lemma states that it is sufficient to construct uniformly consistent test for 
smaller parameter spaces. 

For $\theta_0=(\alpha_0^2,\ldots,\alpha_0^{c-1},\beta_0)$, define
\begin{displaymath}
B_{\epsilon,i}(\theta_0):=\{\theta=(\alpha^2,\ldots,\alpha^{c-1},\beta);(|\alpha^i_0-\alpha^i|^2+|\beta_0-\beta|^2)^{1/2}<\epsilon\}
\end{displaymath}
and $B_\epsilon(\theta_0):=\{\theta;|\theta-\theta_0|<\epsilon\}$. 

\begin{lemma}\label{unilem}
Let $c\ge 3$. 
Suppose that for any $\epsilon>0$ and $i=2,\ldots, c-1$, there exists a uniformly consistent test $(\psi_{n,i};n=1,2,\ldots)$ for $\theta_0$ against 
$B_{\epsilon,i}(\theta_0)^c$. Then for any $\epsilon>0$, there exists a 
uniformly consistent test $(\psi_n;n=1,2,\ldots)$ for $\theta_0$ against $B_\epsilon(\theta_0)^c$. 
\end{lemma}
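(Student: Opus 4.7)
The plan is to reduce the ball $B_\epsilon(\theta_0)$ (a genuine Euclidean ball in all $c-2+p$ coordinates) to the ``cylinder'' sets $B_{\epsilon',i}(\theta_0)$ (which only pin down $\alpha^i$ and $\beta$) via a pigeonhole covering argument, and then take a maximum of the hypothesized tests.

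First I would establish the set-theoretic covering
\begin{equation*}
B_\epsilon(\theta_0)^c \;\subset\; \bigcup_{i=2}^{c-1} B_{\epsilon/\sqrt{c-1},\,i}(\theta_0)^c .
\end{equation*}
To see this, suppose $|\theta-\theta_0|^2 = \sum_{i=2}^{c-1}|\alpha^i-\alpha_0^i|^2 + |\beta-\beta_0|^2 \ge \epsilon^2$. If $|\beta-\beta_0|^2 \ge \epsilon^2/(c-1)$, then for every $i$ we already have $|\alpha_0^i-\alpha^i|^2 + |\beta_0-\beta|^2 \ge \epsilon^2/(c-1)$. Otherwise the $\alpha$ part contributes at least $\epsilon^2(c-2)/(c-1)$, so by pigeonhole some index $i$ satisfies $|\alpha^i-\alpha_0^i|^2 \ge \epsilon^2/(c-1)$, and again $\theta \in B_{\epsilon/\sqrt{c-1},i}(\theta_0)^c$.

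Next I would set $\epsilon' := \epsilon/\sqrt{c-1}$ and, using the hypothesis, choose for each $i=2,\ldots,c-1$ a uniformly consistent test $(\psi_{n,i};n=1,2,\ldots)$ for $\theta_0$ against $B_{\epsilon',i}(\theta_0)^c$. Then I would define
\begin{equation*}
\psi_n(x_n,y_n) := \max_{i=2,\ldots,c-1}\psi_{n,i}(x_n,y_n).
\end{equation*}
The type I error tends to zero because
\begin{equation*}
\int \psi_n\, P_n(dx_ndy_n|\theta_0) \le \sum_{i=2}^{c-1}\int \psi_{n,i}\, P_n(dx_ndy_n|\theta_0) \to 0.
\end{equation*}
For the type II error, given any $\theta \in B_\epsilon(\theta_0)^c$, the covering yields some index $i(\theta)$ with $\theta \in B_{\epsilon',i(\theta)}(\theta_0)^c$, and since $\psi_n \ge \psi_{n,i(\theta)}$ we get $1-\psi_n \le 1-\psi_{n,i(\theta)}$. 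Taking the supremum,
\begin{equation*}
\sup_{\theta\in B_\epsilon(\theta_0)^c}\int(1-\psi_n) P_n(dx_ndy_n|\theta)
\le \max_{i=2,\ldots,c-1}\sup_{\theta\in B_{\epsilon',i}(\theta_0)^c}\int(1-\psi_{n,i}) P_n(dx_ndy_n|\theta),
\end{equation*}
which tends to $0$ as the maximum of finitely many sequences that each vanish by hypothesis.

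There is no real obstacle here: the lemma is a purely combinatorial/measure-theoretic reduction and does not use anything about the cumulative link model itself. The only point that requires minor care is the covering constant, where the normalization $1/\sqrt{c-1}$ arises because $|\beta-\beta_0|^2$ enters each of the $c-2$ cylinder definitions simultaneously, so the pigeonhole has to be split into the two cases above rather than applied blindly to all $c-1$ ``coordinate blocks'' at once.
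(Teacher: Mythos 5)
Your proposal is correct and follows essentially the same route as the paper: cover $B_\epsilon(\theta_0)^c$ by the complements of the cylinder sets $B_{\epsilon',i}(\theta_0)^c$ and take the maximum of the component tests, with type I error controlled by a union bound and type II error by the covering. The paper obtains its covering constant from the one-line inequality $|\theta-\theta_0|^2\le\sum_{i=2}^{c-1}(|\alpha^i-\alpha_0^i|^2+|\beta-\beta_0|^2)$ rather than your two-case pigeonhole, but since $\epsilon$ is arbitrary in both hypothesis and conclusion the difference in constants is immaterial.
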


\begin{proof}
For notational simplicity, set $\theta_0=0$ and write $B_{\epsilon,i}$ and $B_\epsilon$
instead of $B_{\epsilon,i}(\theta_0)$ and $B_\epsilon(\theta_0)$. 
We show that a sequence of test defined by 
$\psi_n:=\max_{i=2,3,\ldots, c-1}\psi_{n,i}$
is a uniformly consistent test for $\theta_0$ against $B_{(c-2)\epsilon}$ if 
$(\psi_{n,i};n=1,2,\ldots)$ is those for $\theta_0$ against $B_{\epsilon,i}$.

First observe that 
\begin{displaymath}
\int\psi_nP_n(dx_ndy_n|\theta_0)\le\sum_{i=2}^{c-1}
\int\psi_{n,i}P_n(dx_ndy_n|\theta_0)\rightarrow 0. 
\end{displaymath} 
On the other hand, by an obvious inequality
\begin{displaymath}
|\theta|^2=\sum_{i=2}^{c-1}|\alpha^i|^2+|\beta|^2\le \sum_{i=2}^{c-1}(|\alpha^i|^2+|\beta|^2), 
\end{displaymath}
for any $\theta\in B_{(c-2)\epsilon}^c$, there exists $i$ such that 
$\theta\in B_{\epsilon,i}^c$. Therefore
\begin{displaymath}
\sup_{\theta\in B_{(c-2)\epsilon}^c}\int (1-\psi_n)P_n(dx_ndy_n|\theta)
\le \max_{i=2,3,\ldots, c-1}\{\sup_{\theta\in B_{\epsilon,i}^c}\int (1-\psi_{n,i})P_n(dx_ndy_n|\theta)\}
\end{displaymath}
which tends to $0$ by assumption. Hence the claim follows. 
\end{proof}

Next we see that for $c=2$ we can construct a uniformly consistent test. 
We use an argument used in Step 1 and 2 of Note 8.4.3 of \cite{LeCamYang00}. 
If we can show the existence of test $\psi_n;X_n\times Y_n\rightarrow [0,1]$
for some $n\in\mathbf{N}$ and a compact set $K$ such that
\begin{equation}\label{c2mod}
\int\psi_n P_n(dx_ndy_n|\theta_0)<\frac{1}{2}<\inf_{\theta\in K^c}
\int\psi_n P_n(dx_ndy_n|\theta),
\end{equation}
then the existence of uniformly consistent test for $\theta_0$ against $B_\epsilon(\theta_0)^c$ follows for any $\epsilon>0$. 
This fact comes from quadratic mean differentiability of the model and 
continuity of $\theta\mapsto P(dxdy|\theta)$ in Proholov metric. 

\begin{lemma}\label{unilem2}
Under Assumption \ref{assqmd} with $c=2$, 
there exists a uniformly consistent test for $\theta_0$ against $B_\epsilon(\theta_0)^c$
for any $\theta_0\in\Theta, \epsilon>0$. 
\end{lemma}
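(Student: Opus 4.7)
The paper's reduction makes it sufficient to produce a single compact $K\subset\Theta$, an integer $n$, and a test $\psi_n$ verifying (\ref{c2mod}). My plan is (i) to obtain a uniform Hellinger-affinity bound $\sup_{|\beta-\beta_0|\geq M}\rho(P_{\theta_0},P_\beta)<1$ for some $M>0$, and then (ii) to convert that bound into a single test via a bounded discriminating statistic.

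For step (i), compute
\[
\rho(\beta):=\rho(P_{\theta_0},P_\beta)=\int\bigl[\sqrt{F(\theta_0^Tx)F(\beta^Tx)}+\sqrt{(1-F(\theta_0^Tx))(1-F(\beta^Tx))}\bigr]P_X(dx).
\]
Cauchy--Schwarz applied pointwise in $x$ gives $\rho(\beta)\leq 1$, with equality iff $F(\beta^Tx)=F(\theta_0^Tx)$; strict monotonicity of $F$ from Assumption \ref{assqmd}(\ref{assqmdfoF}) together with the nondegeneracy of $\mathrm{supp}\,P_\xi$ from Assumption \ref{assqmd}(2) then force $\beta=\theta_0$. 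Dominated convergence gives continuity of $\rho$, so on any compact annulus $\rho$ attains a maximum strictly below $1$. To handle infinity, I argue by contradiction: if $|\beta_n|\to\infty$ and $\rho(\beta_n)\to 1$, the integrand being bounded by $1$ forces it to converge to $1$ $P_X$-a.s.\ along a subsequence, hence $F(\beta_n^Tx)\to F(\theta_0^Tx)$ $P_X$-a.s.\ and so $\beta_n^Tx\to\theta_0^Tx$ $P_X$-a.s. Writing $\beta_n=t_nu_n$ with $|u_n|=1$, and passing to a further subsequence with $u_n\to u^*\in S^{p-1}$, the relation $t_nu_n^Tx\to\theta_0^Tx$ combined with $t_n\to\infty$ gives $u^{*T}x=0$ $P_X$-a.s.; hence $\mathrm{supp}\,P_\xi$ lies in the linear hyperplane of $\mathbf{R}^{p+1}$ orthogonal to $(0,u^{*T})^T$, contradicting Assumption \ref{assqmd}(2). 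Combining, choose $K=\overline{B_M(\beta_0)}$ and $\delta>0$ with $\rho(\beta)\leq 1-\delta$ on $K^c$.

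For step (ii), note that along a ray $\beta=tu$, $t\to\infty$, $P_\beta$ converges weakly to $Q_u(dxdy):=P_X(dx)[\mathbf{1}_{u^Tx>0}\delta_2(dy)+\mathbf{1}_{u^Tx\leq 0}\delta_1(dy)]$, so the family $\{P_\beta:\beta\in K^c\}$ has weakly compact closure (with limits $Q_u$), on which the Hellinger affinity to $P_{\theta_0}$ remains $\leq 1-\delta$. Select a finite Borel partition $\{A_j\}_{j=1}^N$ of $X\times Y$ fine enough that $\inf_{\beta\in K^c}\sum_{j=1}^N|P_\beta(A_j)-P_{\theta_0}(A_j)|\geq\eta$ for some $\eta>0$; existence uses the weak compactness together with the total-variation bound $\|P_{\theta_0}-P_\beta\|_{TV}\geq 1-\rho(\beta)\geq\delta$. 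Define
\[
\psi_n(x_n,y_n)=\mathbf{1}\Bigl\{\sum_{j=1}^N\bigl|n^{-1}\#\{i:(x^i,y^i)\in A_j\}-P_{\theta_0}(A_j)\bigr|>\eta/2\Bigr\}.
\]
Hoeffding's inequality applied cell-by-cell yields $E_{\theta_0}[\psi_n]\leq 2Ne^{-cn\eta^2}$ and $1-\inf_{\beta\in K^c}E_\beta[\psi_n]\leq 2Ne^{-cn\eta^2}$, so for sufficiently large $n$, (\ref{c2mod}) is satisfied.

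The main obstacle is precisely the uniformity over the unbounded set $K^c$: the pointwise Hellinger bound alone yields only Neyman--Pearson tests that depend on each $\beta$. The bridge is the compactification of the alternative family by the directional weak limits $Q_u$, which allows a single finite partition of $X\times Y$ to discriminate uniformly.
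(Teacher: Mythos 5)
Your step (i) is sound: the compactness/contradiction argument does give $\sup_{|\beta-\theta_0|\ge M}\rho(P_{\theta_0},P_\beta)\le 1-\delta$ for some $M$, and reducing the lemma to the single finite-sample test condition (\ref{c2mod}) is exactly the reduction the paper uses. The gap is in step (ii), at the sentence asserting a finite Borel partition with $\inf_{\beta\in K^c}\sum_j|P_\beta(A_j)-P_{\theta_0}(A_j)|\ge\eta$, justified by ``weak compactness together with the total-variation bound.'' That implication is false in general: a family can be weakly relatively compact and uniformly separated from $P_{\theta_0}$ in total variation while no fixed finite partition separates it uniformly (e.g.\ densities $1+\sin(2\pi kx)$ on $[0,1]$ tested against a non-uniform null: the total variation distance is bounded below uniformly in $k$, yet for any fixed partition the cell probabilities converge to those of Lebesgue measure). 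Converting a pointwise Hellinger bound into one uniformly discriminating statistic is precisely the hard part of building a uniformly consistent test over a noncompact alternative, and it must use the specific structure of the family $\{F(\beta^T\cdot)\}$, not soft compactness. Two concrete obstacles you leave unaddressed: (a) weak convergence controls $P_\beta(A)$ only for continuity sets, and your natural discriminating sets are half-spaces whose boundaries $\{u^Tx=0\}$ may carry positive $P_X$-mass under Assumption \ref{assqmd} (e.g.\ discrete $P_X$); in that case $P_{t_nu_n}$ need not converge weakly to your $Q_u$ at all, since the sign of $u_n^Tx$ on $\{u^Tx=0\}$ can oscillate along the sequence. (b) Even granting the limits $Q_u$, you must show a partition chosen for the limit points still works for all large finite $|\beta|$ \emph{uniformly over directions}, which requires a quantitative covering argument (say, sets $\{u^Tx>\epsilon\}$ for finitely many $u$). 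The claim is true for this family because $t\mapsto F(tu^Tx)$ is monotone and the directional limits are genuinely separated from $P_{\theta_0}$, but as written the decisive step is asserted rather than proved. (Minor: your $Q_u$ has the labels of $y=1$ and $y=2$ swapped relative to the model $P(y=1|x)=F(\beta^Tx)$, and ignores the mass $F(0)$ placed on $\{u^Tx=0\}$.)

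For contrast, the paper sidesteps the uniformity problem with an explicit construction: it chooses $z_1,\dots,z_p\in\mathrm{supp}\,P_X$ spanning $\mathbf{R}^p$, splits the alternative into cones $\Theta_i$ on which $|\theta^Tx|>\delta|\theta|$ for all $x\in B_\delta(z_i)$, and for each cone uses a two-observation test based on whether two responses with covariates in $B_\delta(z_i)$ agree --- an event whose probability exceeds $1/2$ in the limit $|\theta|\to\infty$ within $\Theta_i$ (the response becomes deterministic there) but stays below $1/2$ at $\theta_0$. Taking the maximum over $i$ yields (\ref{c2mod}) directly. If you wish to keep the Hellinger route, replace the weak-compactness appeal by an explicit finite family of sets adapted to finitely many support points and directions; doing so essentially reproduces the paper's argument.
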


\begin{proof}
We take three steps to construct a uniformly consistent test. In the first step, 
we divide $\Theta$ into $p$ subsets $(\Theta_i;i=1,\ldots, p)$. 
In the second step, we construct a uniformly consistent test $\psi_{n,i}$ for 
each parametric family $\{P(dxdy|\theta);\theta\in\Theta_i\}$. 
In the last step we set $\psi_n=\max_{i=1,2,\ldots, p}\psi_{n,i}$
which will be a uniformly consistent test. 

For the first step, construct $\Theta_i\ (i=1,2,\ldots, p)$. 
Choose $(z_i;i=1,2,\ldots, p)$ from $\mathrm{supp}\ P_X$ to be 
$\mathrm{span}(z_i;i=1,2,\ldots, p)=\mathbf{R}^{p}$. 
Then there exists $\delta>0$ such that for any $\xi\in\mathbf{R}^{p}$
having $|\xi|=1$, there exists $i\in\{1,2,\ldots, p\}$ such that
\begin{displaymath}
|\xi^T z_i|>2\delta.
\end{displaymath}
By $z_i\in\mathrm{supp}\ P_X$, $p_i:=\int_{B_\delta(z_i)}P_X(dx)>0$
and for $|\xi|=1$, there exists $i\in\{1,2,\ldots, p\}$ such that
\begin{displaymath}
\xi^Tx >\delta\ (\forall x\in B_\delta(z_i)), \mathrm{or}\ 
\xi^Tx <-\delta\ (\forall x\in B_\delta(z_i)).
\end{displaymath}
Therefore, if we take 
\begin{displaymath}
\tilde{\Theta}_i:=\{\theta\neq 0; \theta^Tx >\delta|\theta|\ (\forall x\in B_\delta(z_i)), \mathrm{or}\ 
\theta^Tx <-\delta|\theta|\ (\forall x\in B_\delta(z_i))\}
\end{displaymath}
then $\cup_{i=1}^{p}\tilde{\Theta}_i=\mathbf{R}^{p}\backslash\{0\}$. 
To be disjoint, set $\Theta_1=\tilde{\Theta}_1\cup\{0\}$
and $\Theta_i=\tilde{\Theta}_i\backslash\cup_{j=1}^{i-1}\tilde{\Theta}_j$
for $i=1,\ldots, p$. 

In the second step, we construct a uniformly consistent test for the parametric family 
$\{P(dxdy|\theta);\theta\in\Theta_i\}$ for each $i=1,2,\ldots, p$. 
We show that we can construct a test $\psi_{2,i}$ on $X_2\times Y_2$
which satisfies (\ref{c2mod}). Write $x_2=(x^1,x^2)\in X_2$
and $y_2=(y^1,y^2)\in Y_2$. 
The test is
\begin{displaymath}
\psi_{2,i}(x_2,y_2)=
\left\{
\begin{array}{cl}
1/2&\mathrm{if}\ x^1\ \mathrm{or}\ x^2\in B_\delta(z_i)^c,\\
c_i & \mathrm{if}\ x^1,x^2\in B_\delta(z_i),\ \mathrm{and}\ y^1=y^2,\\
0 & \mathrm{otherwise}
\end{array}
\right.
\end{displaymath}
where $c_i\in (0,1)$ will be defined later. Note that since $2(p_i^2+(1-p_i)^2)>1$, 
$\psi_{2,i}:X_2\times Y_2\rightarrow [0,1]$. 
By definition $\int\psi_{2,i}(x_2,y_2)P_2(dx_2dy_2|\theta)$ is
\begin{equation}\label{test}
\frac{1-p_i^2}{2}
+c_i(\int_{B_\delta(z_i)}F(\theta^Tx)P(dx))^2
+c_i(\int_{B_\delta(z_i)}(1-F(\theta^Tx))P(dx))^2.
\end{equation}
When $c_i=1/2$, this value is bounded by
\begin{displaymath}
\frac{1-p_i^2}{2}
+\frac{1}{2}(\int_{B_\delta(z_i)}F(\theta^Tx)P(dx)+\int_{B_\delta(z_i)}(1-F(\theta^Tx))P(dx))^2
\end{displaymath}
which equals to $1/2$. If we take $|\theta|\rightarrow\infty$, by definition of 
$\Theta_i$, $(F(\theta^Tx),1-F(\theta^Tx))$ tends to $(1,0)$ or $(0,1)$
for $x\in B_\delta(z_i)$ and hence (\ref{test}) tends to
\begin{displaymath}
\frac{1-p_i^2}{2}+c_i.
\end{displaymath}
Hence if we take 
$c_i$ slightly larger than $1/2>p_i^2/2$
 to be $\int\psi_{2,i}(x_2,y_2)P_2(dx_2dy_2|\theta_0)<1/2$, there exists a compact 
 set $K$ such that (\ref{c2mod}) holds. Hence we can find a uniformly consistent test for 
 $\theta_0$ against $B_\delta(\theta_0)\cap \Theta_i$. 
 
 In the last step, we take $\psi_n=\max_{i=1,2,\ldots, n}\psi_{n,i}$
 where each $(\psi_{n,i};n=1,2,\ldots)$ is uniformly consistent test for $\theta_0$
 against $B_\delta(\theta_0)\cap \Theta_i$. 
 Then by construction $(\psi_n;n=1,2,\ldots)$ is uniformly consistent test for $\theta_0$
 against $B_\delta(\theta_0)$. 
\end{proof}

Now we extend this test for the model (\ref{cummod}) for $c\ge 3$. 
By Lemma \ref{unilem} it is  sufficient to construct the test for $\theta_0$
against $B_{\epsilon,i}(\theta_0)$
for each $i=2,\ldots, p$. We apply the test constructed in Lemma \ref{unilem2}
for each $i$. Let $Z=\{1,2\}$ and $Z_n=\{1,2\}^n$
and define a map $\pi_i:X\times Y\rightarrow X\times Z$
to be $\pi_i(x,y)=(x,1+1(y>i))$
and $\pi_{n,i}:X_n\times Y_n\rightarrow X_n\times Z_n$ to
be its obvious generalization. 
When $(x,y)\sim P(dxdy|\theta)$, the law of $(x,z)=\pi_i(x,y)$ only depends on 
$\alpha^i$ and $\beta$ defined by 
\begin{displaymath}
x\sim P(dx),\ P(z=1|\alpha^i,\beta, x)=1-P(z=2|\alpha^i,\beta, x)=F(\alpha^i+\beta^Tx).
\end{displaymath}
Therefore it is a model (\ref{cummod}) for $c=2$ with explanatory variable $(1,x^T)^T$
and parameter $(\alpha^i,\beta^T)^T$. 
Write above model by $P(dxdz|\alpha^i,\beta)$. 
For the parametric family $\{P(dxdz|\alpha^i,\beta);\alpha^i\in\mathbf{R},\beta\in\mathbf{R}^p\}$, by Lemma \ref{unilem2}, we can construct a 
uniformly consistent test $(\tilde{\psi}_{n,i};n=1,2,\ldots)$ for $(\alpha_0^i,\beta_0)$ against 
$\{(\alpha^i,\beta); (|\alpha^i-\alpha^i_0|^2+|\beta-\beta_0|^2)^{1/2}\ge \epsilon\}$. 
Then $\psi_{n,i}(x_n,y_n):=\tilde{\psi}_{n,i}(\pi_{n,i}(x_n,y_n))$
defines a uniformly consistent test for $\theta_0$ against $B_{\epsilon,i}(\theta_0)^c$. 
Hence $\psi_n=\max_{i=2,\ldots, c-1}\psi_{n,i}$ is uniformly consistent test for $\theta_0$
against $B_\epsilon(\theta_0)$. As a summary we obtain the following. 

\begin{proposition}\label{unipro}
For the model (\ref{cummod}) under Assumption \ref{assqmd},
there exists a uniformly consistent test for $\theta_0$ against $B_\epsilon(\theta_0)^c$
for any $\theta_0\in\Theta, \epsilon>0$. 
\end{proposition}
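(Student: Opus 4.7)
The plan is to assemble the uniformly consistent test from the two lemmas already established: Lemma \ref{unilem}, which reduces a ball in $\Theta$ to the union of the coordinate-slab neighborhoods $B_{\epsilon,i}(\theta_0)$, and Lemma \ref{unilem2}, which handles the binary case $c=2$. When $c=2$ there is nothing to do. When $c\ge 3$, by Lemma \ref{unilem} it suffices to exhibit a uniformly consistent test $\psi_{n,i}$ for $\theta_0$ against $B_{\epsilon,i}(\theta_0)^c$ for each $i\in\{2,\ldots,c-1\}$, and then to set $\psi_n=\max_i\psi_{n,i}$.

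For each such $i$, I would construct $\psi_{n,i}$ by a data-processing reduction. Define the coarsening $\pi_i(x,y)=(x,1+1(y>i))$, which collapses $y\in\{1,\ldots,c\}$ into the binary indicator of $\{y>i\}$, together with its product extension $\pi_{n,i}:X_n\times Y_n\to X_n\times Z_n$. A direct computation shows that under $P(dxdy|\theta)$ the push-forward law of $(x,z)=\pi_i(x,y)$ is
\begin{displaymath}
x\sim P_X(dx),\qquad P(z=1\mid x,\alpha^i,\beta)=F(\alpha^i+\beta^Tx),
\end{displaymath}
so it depends only on $(\alpha^i,\beta)\in\mathbf{R}\times\mathbf{R}^p$ and is precisely a binary cumulative link model to which Lemma \ref{unilem2} applies; Assumption \ref{assqmd} is preserved because $\pi_i$ acts only on $y$ while $F$, $f$, and $P_X$ are left unchanged. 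Lemma \ref{unilem2} therefore furnishes a uniformly consistent test $\tilde\psi_{n,i}$ for $(\alpha_0^i,\beta_0)$ against the complement of the $\epsilon$-ball in $\mathbf{R}^{p+1}$. Pulling back via $\psi_{n,i}:=\tilde\psi_{n,i}\circ\pi_{n,i}$ yields a test on $X_n\times Y_n$ whose type~I and type~II errors under $P_n(\cdot|\theta)$ coincide with those of $\tilde\psi_{n,i}$ under the push-forward, so $\psi_{n,i}$ is uniformly consistent for $\theta_0$ against $B_{\epsilon,i}(\theta_0)^c$, and $\psi_n$ is the desired global test.

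I do not anticipate a substantive obstacle: the heavy lifting—quadratic mean differentiability together with the fixed-$n$ separation argument adapted from Note~8.4.3 of \cite{LeCamYang00}—is already contained in Lemma \ref{unilem2}, and the combinatorial reduction across the $c-2$ threshold coordinates is contained in Lemma \ref{unilem}. The only point requiring genuine verification is that the coarsened model really falls under Lemma \ref{unilem2}; this reduces to checking that $\pi_i$ preserves Assumption \ref{assqmd}, which is immediate from its definition.
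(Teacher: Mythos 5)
Your proposal is correct and follows essentially the same route as the paper: reduce to the slabs $B_{\epsilon,i}(\theta_0)^c$ via Lemma \ref{unilem}, coarsen with $\pi_i(x,y)=(x,1+1(y>i))$ to obtain a binary model, apply Lemma \ref{unilem2}, pull back, and take the maximum. The only detail worth making explicit is that the coarsened model has an intercept $\alpha^i$, so Lemma \ref{unilem2} is applied with explanatory variable $\xi=(1,x^T)^T$ and parameter $(\alpha^i,\beta^T)^T$, which is exactly why Assumption \ref{assqmd} imposes the non-degeneracy condition on $P_\xi$ rather than on $P_X$.
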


If there exists a uniformly consistent test, the posterior distribution has 
consistency under regularity condition on the prior distribution. 
Let $\Lambda(d\theta)=\lambda(\theta)d\theta$ be a prior distribution 
where $d\theta$ denote the Lebesgue measure. 
Let 
\begin{displaymath}
P_n(dx_n,dy_n)=\int_\Theta P_n(dx_ndy_n|\theta)\Lambda(d\theta). 
\end{displaymath}
Assume the existence of 
$P_n(d\theta|x_n,y_n)$ such that
\begin{displaymath}
P_n(d\theta|x_n,y_n)P_n(dx_n,y_n)=P_n(dx_n,dy_n|\theta)\Lambda(d\theta). 
\end{displaymath}
Write $I(\theta)$ for the Fisher information matrix of $P(dxdy|\theta)$ and 
write $\hat{\theta}_n$ for the central value of $P_n(d\theta|x_n,y_n)$. 
The following is a consequence of Bernstein-von Mises's theorem. 
Let $\|\mu-\nu\|=\sup_{A\in\mathcal{E}}|\mu(A)-\nu(A)|$ be the total variation 
distance between probability measures $\mu$ and $\nu$ on $(E,\mathcal{E})$, 

\begin{corollary}\label{bvmthm}
Assume $\lambda$ is continuous and strictly positive. Under Assumption \ref{assqmd}, 
\begin{displaymath}
\int_{X_n,Y_n}\|P_n(d\theta|x_n,y_n)-N(\hat{\theta}_n,n^{-1}I(\hat{\theta}_n))\|P_n(dx_ndy_n)\rightarrow 0. 
\end{displaymath}
\end{corollary}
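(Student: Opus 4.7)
The plan is to reduce the claim to a standard Bernstein--von Mises theorem (e.g., the version in Section 8.4 of \cite{LeCamYang00}) whose hypotheses are precisely what the paper has assembled. That theorem requires: (i) quadratic mean differentiability of the model at $\theta_0$, (ii) a non-singular Fisher information $I(\theta_0)$, (iii) a uniformly consistent test of $\theta_0$ against $B_\epsilon(\theta_0)^c$ for every $\epsilon>0$, and (iv) a prior density positive and continuous at $\theta_0$. In our setting (i) is the preceding proposition, (iii) is Proposition \ref{unipro}, and (iv) is assumed; (ii) follows by combining continuity of $\theta \mapsto I(\theta)$ (established inside the QMD proof) with the non-degeneracy of $\mathrm{supp}\,P_\xi$ in Assumption \ref{assqmd} and the strict positivity of $f$, which together prevent $I(\theta_0)$ from having any null direction.

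Given (i)--(iv), the BvM theorem yields, for each $\theta_0 \in \Theta$, an efficient sequence of estimators $\tilde\theta_n(x_n,y_n)$ such that under $P_n(\cdot|\theta_0)$,
\begin{equation}\nonumber
\bigl\|P_n(d\theta|x_n,y_n) - N\bigl(\tilde\theta_n,\, n^{-1}I(\theta_0)^{-1}\bigr)\bigr\| \to 0
\end{equation}
in probability, hence in $L^1$ by boundedness of total variation. The second step is to replace $\tilde\theta_n$ by the central value $\hat\theta_n$, and $I(\theta_0)$ by $I(\hat\theta_n)$. Posterior concentration at rate $n^{-1/2}$, combined with the continuity-under-weak-convergence property of the central value recalled in Section \ref{normal} (following \cite{Ito}), forces $\sqrt{n}(\hat\theta_n - \tilde\theta_n) = o_{P_n(\cdot|\theta_0)}(1)$; continuity of $I$ plus consistency of $\hat\theta_n$ then give $I(\hat\theta_n) \to I(\theta_0)$ in probability. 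A routine total-variation estimate for two Gaussians with $n^{-1/2}$-close means and asymptotically equal rescaled covariances completes this step.

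To make the statement unconditional, observe that the convergence obtained so far is pointwise in $\theta_0 \in \Theta$ and bounded by $1$. Integrating against $\Lambda(d\theta_0)$, the identity $P_n(dx_n dy_n) = \int_\Theta P_n(dx_n dy_n|\theta_0)\Lambda(d\theta_0)$ together with Fubini and dominated convergence delivers the claim.

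The main obstacle is the centering replacement. Since $\tilde\theta_n$ is only asserted to exist (not given by a closed formula), the bound $\sqrt{n}(\hat\theta_n - \tilde\theta_n) = o_P(1)$ must be argued indirectly through the approximate normality of the posterior at scale $n^{-1/2}$: both centerings should be seen to lie within $o_P(n^{-1/2})$ of the $\arctan$-mean of $P_n(\cdot|x_n,y_n)$, and the tails of the posterior outside an $n^{-1/2}\log n$-neighbourhood of $\tilde\theta_n$ must be shown to contribute negligibly to the central value, using the tail control inherent in TV-closeness to a Gaussian together with the posterior consistency implicit in (iii). Once these are in place, the coordinatewise definition of the central value causes no trouble since TV convergence of the joint posterior entails TV convergence of each marginal.
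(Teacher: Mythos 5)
Your proposal is correct and follows exactly the route the paper intends: the paper gives no proof, merely asserting the corollary as a consequence of the Bernstein--von Mises theorem, whose hypotheses (quadratic mean differentiability, nonsingular Fisher information, the uniformly consistent test of Proposition \ref{unipro}, and the continuous strictly positive prior) are precisely what you verify before handling the recentering at the central value and integrating over the prior. Your treatment of the centering replacement via the translation-equivariance and weak-convergence continuity of the central value, and your reading of the covariance as $n^{-1}I(\hat{\theta}_n)^{-1}$, supply details the paper leaves implicit (the latter correcting an apparent typo) without departing from its approach.
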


We will denote $\Pi_n(x_n,y_n,d\theta)$ for $P_n(d\theta|x_n,y_n)$.
We also denote $\Pi_n^*(x_n,y_n,d\theta)$ for its scaling 
by $\theta\rightarrow n^{1/2}(\theta-\hat{\theta}_n)$. 
By the above corollary, 
the total variation distance between $\Pi_n^*(x_n,y_n,d\theta)$
and $N(0,I(\hat{\theta}_n)^{-1})$ tends to $0$. 

\section{Application to Gibbs sampler for cumulative link model}\label{ags}

We consider asymptotic properties of the Gibbs sampler for cumulative 
link model. 
Let $(X\times Y,\mathcal{X}\otimes \mathcal{Y})$
be a probability space defined in Section \ref{apc} and 
let $P(dxdy|\theta)$ be a parametric family defined in (\ref{cummod}). 
Under the same settings as Subsection \ref{uct}, 
we construct Markov chain Monte Carlo methods on the model (\ref{cummod})
and examine its efficiency. 

\subsection{Gibbs sampler and its marginal augmentation}
To construct Gibbs sampler, we introduce a hidden variable $z\in Z=\mathbf{R}$. 
There are several possibilities for the choice of the structure.
We consider two choices among them. 
We refer the former, ``null-conditional update'' and ``$\beta^Tx$-conditional update'' for the latter:
\begin{equation}\label{fullmodel}
\left\{\begin{array}{lll}
x\sim P_X(dx) &z\sim f(z)dz &y=j\ \mathrm{if}\ z\in (\alpha^{j-1}+\beta^Tx,\alpha^j+\beta^Tx]\\
x\sim P_X(dx) &z\sim f(z+\beta^Tx)dz & y=j\ \mathrm{if}\ z\in (\alpha^{j-1},\alpha^j]
\end{array}\right.
\end{equation}
Above update defines $P(dxdydz|\theta)$. For example, for $\beta^Tx$-conditional update
\begin{displaymath}
P(dxdydz|\theta)=\sum_{j=1}^c P_X(dx)f(z+\beta^Tx)1_{(\alpha^{j-1},\alpha^j]}(z)dz\delta_j(dy).
\end{displaymath}
For each construction $P(dxdy|\theta)=\int_{z\in Z}P(dxdydz|\theta)$
is equal to the parametric family defined in (\ref{cummod}). 
As Definition \ref{ssg}, we can construct
a sequence of standard Gibbs sampler 
 $\overline{\mathcal{M}}_n=(\overline{M}_n,\theta)$
on $(X_n\times Y_n,\mathcal{X}_n\otimes\mathcal{Y}_n,P_n(dx_ndy_n))$
on $(S_n,\Theta)$ where 
$S_n=Z^n\times\Theta$.

The Gibbs sampler  $\overline{\mathcal{M}}_n$ is known to work poorly 
except $c=2$ with $\beta^Tx$-conditional update. This 
phenomena can be explained by our approach. 
When $\hat{\theta}_n(x_n,y_n)$ is the central value of $P_n(d\theta|x_n,y_n)$, 
a scaling $\theta\mapsto n^{1/2}(\theta-\hat{\theta}_n(x_n,y_n))$ can be defined. 
We will show that the sequence of a standard Gibbs sampler  $\overline{\mathcal{M}}_n$ 
is not locally consistent because
 the model does not satisfy the regularity condition of 
Theorem 6.4 of \cite{Kamatani10} except the case $c=2$ with $\beta^Tx$-conditional update. The detail will be discussed later.

On the other hand, there are some Markov chain Monte Carlo methods which 
works better than above Gibbs sampler. We consider a marginal augmentation method
introduced by \cite{MengDyk99} (See also closely related algorithm, parameter expansion method by \cite{LiuWu99}). In the method, we introduce a new parameter $g\in (0,\infty)$
with prior $\Lambda_g$
and write $\vartheta=(\theta,g)\in\Theta^X:=\Theta\times (0,\infty)$ for the new parameter set
with new prior distribution
\begin{equation}\label{prior}
\Lambda^X(d\vartheta)=\Lambda(gd\theta)\Lambda_g(dg).
\end{equation}
The new model with new parameter set is defined by
\begin{equation}\label{xfullmodel}
\left\{\begin{array}{lll}
x\sim P_X(dx) &z\sim f(gz)gdz &y=j\ \mathrm{if}\ z\in (\alpha^{j-1}+\beta^Tx,\alpha^j+\beta^Tx]\\
x\sim P_X(dx) &z\sim f(g(z+\beta^Tx))gdz & y=j\ \mathrm{if}\ z\in (\alpha^{j-1},\alpha^j]
\end{array}\right.
\end{equation}
where we refer the former, ``null-conditional update with marginal augmentation'' and ``$\beta^Tx$-conditional update with marginal augmentation'' for the latter. 
Write the above parametric family $P(dxdydz|\vartheta)$. 
The original model $P(dxdydz|\theta)$ corresponds to 
$P(dxdydz|\vartheta=(\theta,1))$. 
Some important properties are summarized as follows: 
\begin{enumerate}
\item
Its $X\times Y$ marginal is written by the original model:
for $\vartheta^*=(\theta^*,g^*)$, 
\begin{displaymath}
\int_{z\in Z}P(dxdydz|\vartheta^*)=P(dxdy|\vartheta^*)=P(dxdy|\theta=g^*\theta^*)
\end{displaymath}
where the parametric family in the right hand side is (\ref{cummod}). 
\item 
The $g\theta$-marginal of prior and posterior distribution for parameter expanded model are the same as those without  expansion, that is
\begin{equation}\nonumber
\left\{\begin{array}{l}
\int_{\vartheta=(\theta,g)\in\Theta^X}1_A(g\theta)\Lambda^X(d\vartheta)=\Lambda(A),\\
\int_{\vartheta=(\theta,g)\in\Theta^X}1_A(g\theta)P_n(d\vartheta|x_n,y_n)=\int_A P_n(d\theta|x_n,y_n). 
\end{array}\right.
\end{equation}
\item The probability distribution $P_n(dx_ndy_n)$ is well defined in the following sense:
\begin{displaymath}
\int_{\Theta^X}P_n(dx_ndy_n|\vartheta)\Lambda^X(d\vartheta)
=\int_{\Theta}P_n(dx_ndy_n|\vartheta)\Lambda(d\theta).
\end{displaymath}
\end{enumerate}
We construct a standard Gibbs sampler $(\overline{M}_n^X,\vartheta)$
on $(X_n\times Y_n,\mathcal{X}_n\otimes\mathcal{Y}_n,P_n(dx_ndy_n))$
 on $(S_n^X,\Theta^X)$ where $S_n^X=Z^n\times \Theta^X$. 

We will call 
$\{(\overline{M}_n^X,g\theta);n=1,2,\ldots\}$ (not $(\overline{M}_n^X,\vartheta)$)
 a sequence of standard Gibbs sampler with marginal augmentation. 
 
 In our approach, we can show a result summarized in Table \ref{nopx}. 
 \begin{table}[htbp]
\begin{center} 
\begin{tabular}{| c | c | c| c | c|} 
\hline & Null & $\beta^Tx$  & Null with MA & $\beta^Tx$ with MA\\ 
\hline $c=2$  & X & O &P&O\\  
\hline $c=3$  & X & X &X&O\\ 
\hline $c\ge 4$  & X & X &X&X\\ \hline 
\end{tabular}
\caption{\label{nopx}Asymptotic properties of Gibbs sampler with and without marginal augmentation (MA). The letter O means local consistency and X means local non-consistency. P means local consistency for $p=1$. }
\end{center}
\end{table} 

According to the table, marginal augmentation has better asymptotic properties
for some cases for $c=2,3$. 
However, for $c\ge 4$, any of Gibbs sampler does not have local consistency 
even with marginal augmentation. 

\begin{remark}
There are some  other Markov chain Monte Carlo methods which improve 
original Gibbs sampler. 
For example, parameter expansion methods are studied in such as \cite{LiuWu99} and \cite{HobertMarchev08}.
These algorithms are closely related to marginal argumentation method
and it seems to have the same asymptotic properties described above. 
In \cite{CowlesSC96}, 
Metropolis-within-Gibbs algorithm is considered. It seems to have 
local consistency even for $c\ge 4$ although the choice of proposal distribution 
is difficult. 
\end{remark}

Figure \ref{Figure2} is the simulation results for cumulative probit model for $c=4$
for Gibbs samplers 
$\beta^Tx$-conditional update with/without marginal augmentation.
These are trajectory of the sequence $\theta(i)\ (i=0,\ldots, m-1)$ for $m=200$ generated by Gibbs samplers. 

\begin{figure}[htbp]
\includegraphics[width=12cm,bb=0 0 779 500]{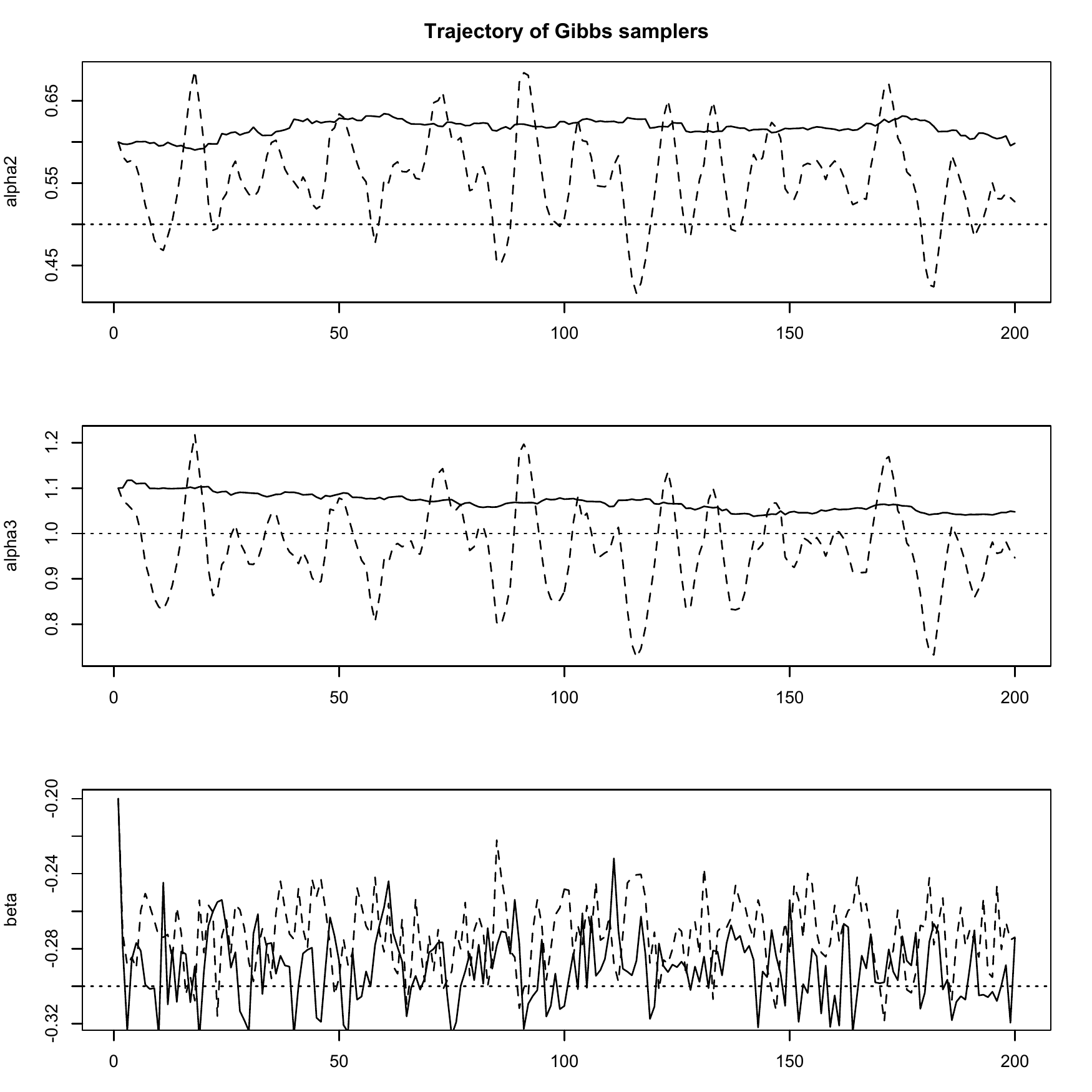}
\caption{Trajectory of the Gibbs samplers for sample size $n=1000$ for $\alpha_2$ (upper) $\alpha_3$ (middle) and $\beta$ (bottom). Solid line is for without MA and dashed lines is for with MA. Horizontal line is the true value. \label{Figure2}}
 \end{figure}
 
 The above figure shows that 
 (a) ``without MA'' is much worse than ``with MA'', 
 (b) for both Gibbs samplers, the mixing property for $\beta$ is not so bad
 and (c) ``with MA'' seems to work well for all parameters. However according to 
 Table \ref{nopx}, ``with MA'' is also locally non-consistent. 
 
 By making a projection $\theta\mapsto \alpha_3/\alpha_2$, 
 we can visualize its local degenerate behavior. 
 Figure \ref{Figure3} is the trajectory of $\alpha_2(i)/\alpha_3(i)\ (i=0,\ldots, m-1)$
 for $m=1000$. Therefore even if ``with MA'' seemed to work well, it has the similar degenerate behavior with ``MA'' and the parameter estimation may cause
 bias. 
 
 \begin{figure}[htbp]
\includegraphics[width=12cm,bb=0 0 779 300]{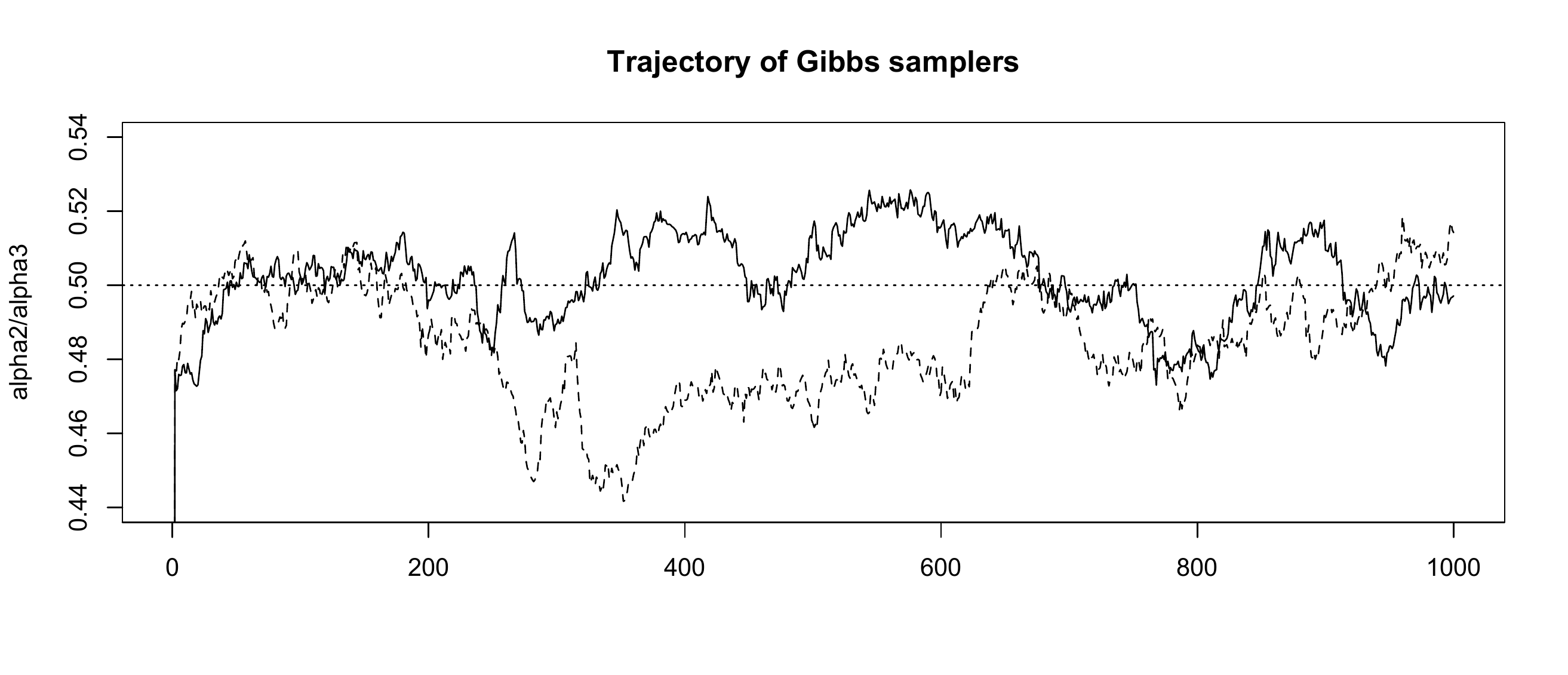}
\caption{Trajectory  for $\alpha_3/\alpha_2$. 
Solid line is for without MA and dashed lines is for with MA. Horizontal line is the true value. \label{Figure3}}
 \end{figure}
 
 In the rest of this section, we prove above results.

\subsection{Approximation of the Gibbs sampler}
We write $(M_n^X,\vartheta)$ for the minimal representation of 
$(\overline{M}_n^X,\vartheta)$. 
Note that the minimal representation of 
$(\overline{M}_n^X,g\theta)$ is $(M_n^X,g\theta)$. 
In this subsection, we construct a formal approximation of $M_n^X$. 
The random Markov measure $M_n^X$ is generated by
$(\Pi_n^X, K^X_n)$ where 
\begin{equation}\nonumber
\left\{\begin{array}{ccc}
\Pi_n^X(x_n,y_n,d\vartheta)&=&P_n(d\vartheta|x_n,y_n)\\
K^X_n(x_n,y_n,\vartheta,d\vartheta^*)&=&
\int_{z_n} P_n(dz_n|x_n,y_n,\vartheta)P_n(d\vartheta^*|x_n,y_n,z_n)
\end{array}\right. 
\end{equation}
where $\vartheta=(\theta,g)$ and $\vartheta^*=(\theta^*,g^*)$. 
Although the parametric family $P(dxdydz|\vartheta)$ does not have 
sufficient regularity described in Theorem 6.4 of \cite{Kamatani10}, 
it has a similar approximation. 

First we remark an important structure of the current model. 
The parameter $\vartheta$ can be divided into $\vartheta_F$ and $\vartheta_M$, where 
the letter ``F'' means ``(Almost) Fixed'' parameter and 
``M'' means ``unfixed (moving)'' parameter. We have 
\begin{displaymath}
P_n(d\vartheta^*|x_n,y_n,z_n)
=P_n(d\vartheta^*_F|x_n,y_n,z_n)
P_n(d\vartheta^*_M|x_n,y_n,z_n).
\end{displaymath}
Depending on the model, $\vartheta_F=\theta, \vartheta_M=g$
 for null-conditional update and $\vartheta_F=\alpha, \vartheta_M=(\beta,g)$ for $\beta^Tx$-conditional update. See the following table. 
We write $\Theta_F^X$ and $\Theta_M^X$ corresponding parameter spaces. 

 \begin{table}[htbp]
\begin{center} 
\begin{tabular}{ c | c | c} 
 & $\vartheta_F$ & $\vartheta_M$  \\ 
\hline Null-conditional   & $\theta$ & $g$\\  
 $\beta^Tx$-conditional  & $\alpha$ & $(\beta,g)$
\end{tabular}
\caption{$\vartheta_F$ and  $\vartheta_M$}
\end{center}
\end{table}

We prepare notation for the a) Fisher information matrix, b) the central value and 
c) the normalized score function
for two models 
A) 
$\{P(dxdy|\vartheta);\vartheta\in\Theta^X\}$
and B)
$\{P(dxdydz|\vartheta);\vartheta_M\in\Theta_M^X\}$ for fixed $\vartheta_F$. 
For fixed $\vartheta\in\Theta^X$, 
we write $X\equiv^a Y$ if 
$X-Y$ tends in $P_n(dx_ndy_n|\vartheta)$-probability to $0$. 

\begin{enumerate}
\item[a)]
Write Fisher information matrices by 
\begin{displaymath}
I(\vartheta) = \left( \begin{matrix} I_{F}(\vartheta)&I_{F,M}(\vartheta)\\ I_{M,F}(\vartheta)&I_{M}(\vartheta) \end{matrix} \right),\ 
K_M(\vartheta)
\end{displaymath}
for models A) and B) with respectively. 
We write $J_M(\vartheta)=K_M(\vartheta)-I_M(\vartheta)$. 
\item[b)]
Write central values by 
\begin{displaymath}
\hat{\vartheta}_n(x_n,y_n) = \left( \begin{matrix} \hat{\vartheta}_{F,n}(x_n,y_n)\\
\hat{\vartheta}_{M,n}(x_n,y_n) \end{matrix} \right),\ 
\hat{\vartheta}_{M,n}(x_n,y_n,\vartheta_F)
\end{displaymath}
for $P_n(d\vartheta|x_n,y_n)$ and $P_n(d\vartheta_M|x_n,y_n,\vartheta_F)$
with respectively. 
Note that $\hat{\vartheta}_{M,n}(x_n,y_n)\equiv^a \hat{\vartheta}_{M,n}(x_n,y_n,\vartheta_F)$. 
We denote $\hat{I}, \hat{J}_M$ and $\hat{K}_M$
for Fisher information matrices 
$I(\vartheta), J_M(\vartheta)$
and $K_M(\vartheta)$
at $\vartheta=\hat{\vartheta}_n(x_n,y_n)$. 
\item[c)]
Write a normalized score function of A) by 
\begin{displaymath}
Z_n(x_n,y_n|\vartheta)=\left(\begin{matrix}Z_{F,n}(x_n,y_n|\vartheta)\\Z_{M,n}(x_n,y_n|\vartheta)\end{matrix}\right).
\end{displaymath}
See (\ref{nor}) for the definition of normalized score function. 
\end{enumerate}

Now we are going to construct an approximation of $K_n^X$. 
Since $\vartheta_F$ is almost fixed parameter
\begin{displaymath}
K^X_n(x_n,y_n,\vartheta,d\vartheta^*)\sim
K^X_{M,n}(x_n,y_n,\vartheta,d\vartheta_M^*)\delta_{\vartheta_F}(d\vartheta_F^*)
\end{displaymath} 
(just a formal sense) where 
\begin{displaymath}
K^X_{M,n}(x_n,y_n,\vartheta,d\vartheta_M^*)=
\int_{z_n} P_n(dz_n|x_n,y_n,\vartheta)P_n(d\vartheta^*_M|x_n,y_n,z_n).
\end{displaymath} 
As an update of $\vartheta_M$, 
$K_{M,n}^X$
is a transition kenel of a standard Gibbs sampler for 
parametric family B). 
With a regularity conditions, we can directly apply Theorem 6.4
to the model B) which yields 
normal approximation of $K_{M,n}^X$
\begin{displaymath}
N(\hat{\vartheta}_{M,n}(x_n,y_n,\vartheta_F)+\hat{K}_M^{-1}\hat{J}_M(\vartheta_M-\hat{\vartheta}_{M,n}(x_n,y_n,\vartheta_F)), 
n^{-1}\hat{K}_M^{-1}+n^{-1}\hat{K}_M^{-1}\hat{J}_M\hat{K}_M^{-1}).
\end{displaymath}
Here we used $(\hat{\vartheta}_{M,n}(x_n,y_n),\vartheta_F)\equiv^a (\hat{\vartheta}_{M,n}(x_n,y_n,\vartheta_F),\vartheta_F)$. 
We denote $\tilde{K}_{M,n}^X$ for this approximated probability transitoin kernel. 
We can rewrite  $\hat{\vartheta}_{M,n}(x_n,y_n,\vartheta_F)$ 
using $\hat{\vartheta}_n(x_n,y_n)$. 
Under $P_n(dx_ndy_n|\vartheta)$, 
\begin{displaymath}
\left\{\begin{array}{lcl}
\hat{\vartheta}_{M,n}(x_n,y_n,\vartheta_F)&\equiv^a&
\vartheta_M+n^{-1/2}\hat{I}_M^{-1}Z_{M,n}(x_n,y_n|\vartheta),\\ 
\hat{\vartheta}_n(x_n,y_n)&\equiv^a&
\vartheta+n^{-1/2}\hat{I}^{-1}Z_n(x_n,y_n|\vartheta).
\end{array}\right. 
\end{displaymath}
Then by a simple algebra, 
\begin{displaymath}
n^{1/2}(\hat{\vartheta}_{M,n}(x_n,y_n,\vartheta_F)
-\hat{\vartheta}_{M,n}(x_n,y_n))\equiv^a
\hat{I}_M^{-1}\hat{I}_{M,F}n^{1/2}(\vartheta_F-\hat{\vartheta}_{F,n}(x_n,y_n)).
\end{displaymath}
This yields an approximation of $K_{M,n}^X$
by normal distribution with mean
\begin{displaymath}
\hat{\vartheta}_{M,n}(x_n,y_n)+\hat{K}_M^{-1}\hat{J}_{M}(\vartheta_M-\hat{\vartheta}_{M,n}(x_n,y_n))
+\hat{K}_M^{-1}\hat{I}_{M,F}(\vartheta_F-\hat{\vartheta}_{F,n}(x_n,y_n))
\end{displaymath}
with variance $n^{-1}\hat{K}_M^{-1}+n^{-1}\hat{K}_M^{-1}\hat{J}_M\hat{K}_M^{-1}$. 
We denote this normal approximation by $\overline{K}_{M,n}^X$. 
Hence we obtain approximation $\overline{K}_{M,n}^X(x_n,y_n,\vartheta,d\vartheta_M^*)\delta_{\vartheta_F}(d\vartheta_F^*)$ of $K_n^X$. 
%

\subsection{Asymptotic properties of the Gibbs sampler}

In this subsection, we study asymptotic properties of 
the Gibbs sampler. It is just a validation of the previous subsection. 
We assume the following.

\begin{assumption}\label{asspri}
\begin{enumerate}
\item
$\Lambda^X$ has the form (\ref{prior}) and 
 $\Lambda(d\theta)=\lambda(\theta)d\theta$
 and 
  $\Lambda_g(dg)=\lambda_g(g)dg$
for Lebesgue measure $d\theta$ and $dg$ where $\lambda(\theta), \lambda_g(g)$ are continuous and 
strictly positive. 
\item $f$ has a derivative $f'$ which is continuous and 
\begin{displaymath}
K:=\int (1+z\frac{f'(z)}{f(z)})^2dz\in (0,\infty). 
\end{displaymath}
\end{enumerate}
\end{assumption}

With the above assumption, we can show that null-conditional update produces 
local degenerate Gibbs sampler
for a map $\theta\rightarrow n^{1/2}(\theta-\hat{\theta}_n(x_n,y_n))$.
For probability transition kernels 
$\mu(x,dy)$ and $K(x,y,dz)$, we denote
\begin{displaymath}
(\mu\otimes K)(x,dy,dz)=\mu(x,dy)K(x,y,dz).
\end{displaymath}

We write $\vartheta=(\alpha^{2},\ldots, \alpha^{c-1},\beta,g)$ and $\vartheta^*=(\alpha^{2*},\ldots, \alpha^{c-1*},\beta^*,g^*)$ for 
elements of $\Theta^X$. We also write 
$(\theta,g)$ or $(\alpha,\beta,g)$ for $\vartheta$ and 
$(\theta^*,g^*)$ or $(\alpha^*,\beta^*,g^*)$ for $\vartheta^*$
with respectively.  

\begin{lemma}\label{nullem}
Under Assumptions \ref{assqmd} and \ref{asspri}, 
for null-conditional update construction with marginal augmentation, the following value tends to $0$:
\begin{equation}\label{nulcon}
\int_{x_n,y_n,\vartheta,\vartheta^*} \min\{\sqrt{n}|\theta-\theta^*|,1\}(\Pi_n^X\otimes K_n^X)(x_n,y_n,d\vartheta,d\vartheta^*)P_n(dx_ndy_n).
\end{equation}
For $\beta^Tx$-conditional update with marginal augmentation, 
\begin{equation}\label{betlemeq1}
\int \min\{\sqrt{n}|\alpha-\alpha^*|,1\}(\Pi_n^X\otimes K_n^X)(x_n,y_n,d\vartheta,d\vartheta^*)P_n(dx_ndy_n).
\end{equation}
tends to $0$.
\end{lemma}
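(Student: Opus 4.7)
The approach is to formalize the heuristic approximation from the previous subsection: the transition kernel $K^X_n$ is, in an integrated total-variation sense, of the product form $\overline{K}^X_{M,n}(x_n,y_n,\vartheta,d\vartheta^*_M)\delta_{\vartheta_F}(d\vartheta^*_F)$. Under this approximation $\vartheta^*_F=\vartheta_F$ exactly, so the integrand of (\ref{nulcon}) (which for the null-conditional update measures $\sqrt{n}|\theta-\theta^*|$ and has $\vartheta_F=\theta$) and the integrand of (\ref{betlemeq1}) (for which $\vartheta_F=\alpha$) vanish identically. The task is therefore to show that the $\vartheta^*_F$-marginal of $K^X_n(x_n,y_n,\vartheta,\cdot)$ is concentrated on $\vartheta_F$ at a rate strictly faster than $n^{-1/2}$, after averaging against $P_n(dx_ndy_n)\Pi^X_n(x_n,y_n,d\vartheta)$.

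\textbf{Key step.} I would factor a single Gibbs step as $\vartheta\mapsto z_n\mapsto\vartheta^*$ through $P_n(dz_n\mid x_n,y_n,\vartheta)$ and $P_n(d\vartheta^*\mid x_n,y_n,z_n)$, and analyse the inner conditional $P_n(d\vartheta^*_F\mid x_n,y_n,z_n)$. For the null-conditional construction the inequalities $\alpha^{*\,j-1}+\beta^{*T}x^i < z^i\le \alpha^{*\,j}+\beta^{*T}x^i$ (for $y^i=j$) confine $(\alpha^*,\beta^*)$ to a polytope whose diameter is of order $1/n$ under Assumption \ref{assqmd}. For the $\beta^T x$-conditional construction, the constraints $\alpha^{*\,j-1} < z^i\le \alpha^{*\,j}$ confine each threshold $\alpha^{*j}$ to an interval of width $O(1/n)$ bracketed by adjacent order statistics of $\{z^i\}$. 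In either case $\vartheta^*_F$ given $(x_n,y_n,z_n)$ lies within $O_{P_n}(n^{-1})$ of a deterministic centre $\hat\vartheta_F(x_n,y_n,z_n)$.

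\textbf{Combining.} By the conditional-probability factorization $P_n(dz_n\mid x_n,y_n,\vartheta)P_n(d\vartheta\mid x_n,y_n)=P_n(d\vartheta\mid x_n,y_n,z_n)P_n(dz_n\mid x_n,y_n)$, the joint law of $(\vartheta,z_n)$ coincides with that of $(\vartheta^*,z_n)$ when $\vartheta\sim\Pi^X_n(x_n,y_n,\cdot)$, so $\vartheta_F$ enjoys the same $O_{P_n}(n^{-1})$ concentration around $\hat\vartheta_F(x_n,y_n,z_n)$. The triangle inequality then gives
\[
\sqrt{n}|\vartheta_F-\vartheta^*_F|\le \sqrt{n}|\vartheta_F-\hat\vartheta_F|+\sqrt{n}|\vartheta^*_F-\hat\vartheta_F|=o_{P_n}(1),
\]
and since the integrand $\min\{\sqrt{n}|\cdot|,1\}$ is bounded by $1$, dominated convergence delivers (\ref{nulcon}) and (\ref{betlemeq1}).

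\textbf{Main obstacle.} The delicate point is the quantitative diameter bound in the Key Step: one must show that the constraint polytope shrinks at rate $1/n$ rather than $1/\sqrt n$. The intuition is that the number of binding hyperplane constraints grows linearly with $n$ while the full-support and non-degeneracy condition of Assumption \ref{assqmd} rules out a long ``cylinder'' direction in the polytope, so that the gap between tightest lower and upper active constraints behaves like an extreme order statistic of a density-bounded distribution, hence is $O_{P_n}(1/n)$. Making this precise uniformly on a set of $(x_n,y_n,z_n)$ of $P_n$-probability tending to one---and patching it across the $c-2$ thresholds in the $\beta^T x$-conditional case, or across the simultaneous bounds on $(\alpha,\beta)$ in the null-conditional case---is the technical core of the proof.
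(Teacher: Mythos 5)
Your architecture is sound and matches the paper's in spirit: under the stationary two-block structure, $\vartheta$ and $\vartheta^*$ are conditionally i.i.d.\ from $P_n(d\vartheta|x_n,y_n,z_n)$ given $(x_n,y_n,z_n)$, so the lemma reduces to showing that the $\vartheta_F$-marginal of this conditional posterior has support of diameter $o_{P_n}(n^{-1/2})$; the hard constraints imposed by $z_n$ in (\ref{xfullmodel}) are indeed the mechanism. But the proposal stops exactly where the proof begins: the quantitative shrinkage of the constraint region is asserted with an ``extreme order statistic'' heuristic and then explicitly deferred as ``the technical core.'' That estimate \emph{is} the lemma; without it you have a reduction, not a proof. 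Moreover, your claimed $O_{P_n}(n^{-1})$ diameter is both stronger than needed and not obviously true in the null-conditional case: there the constraints are half-spaces $\{\theta_j^{*T}\xi^i \lessgtr z^i\}$ with continuously distributed normals $\xi^i$, so pinning down the vector $\theta_j^*\in\mathbf{R}^{p+1}$ in a fixed direction $\xi_0$ requires restricting to observations with $\xi^i\in B_r(\xi_0)$, and the resulting slack term $(\theta_j^*-\theta_j)^T(\xi^i-\xi_0)$ is of order $r\,|\theta_j^*-\theta_j|$. Controlling it requires an a priori tightness bound $|\theta-\theta^*|=O_{P_n}(n^{-1/2})$ (the paper gets this from Corollary \ref{bvmthm} plus stationarity before doing anything else), and even then one only obtains $o_{P_n}(n^{-1/2})$, not $O_{P_n}(n^{-1})$. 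Your one-dimensional order-statistic picture applies cleanly only to the thresholds $\alpha^{*j}$ in the $\beta^Tx$-conditional case.

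Concretely, the missing computation in the paper's proof runs as follows: fix $\epsilon>0$, use tightness to work on $\{\sqrt{n}|\theta-\theta^*|\le C_\epsilon\}$, set $r=\epsilon/2C_\epsilon$, and observe that $(\theta_j^*-\theta_j)^T\xi_0$ is sandwiched between $S_n-\epsilon/2\sqrt{n}$ and $T_n+\epsilon/2\sqrt{n}$, where $S_n$ and $T_n$ are the extremal values of $z^i-\theta_j^T\xi^i$ over observations with $\xi^i\in B_r(\xi_0)$ on either side of the threshold. The event $\{S_n<-\epsilon/2\sqrt{n}\}$ means no observation lands in a slab of probability $p_n$ with $n^{1/2}p_n$ converging to a positive constant (here Assumption \ref{assqmd} enters through positivity of $f$ and $\xi_0\in\mathrm{supp}\,P_\xi$), so its probability $(1-p_n)^n\to 0$; repeating over a spanning set $\xi_0,\ldots,\xi_p$ of $\mathrm{supp}\,P_\xi$ and integrating over $\vartheta$ by bounded convergence gives $\sqrt{n}|\theta-\theta^*|\to 0$ in probability, whence (\ref{nulcon}) by boundedness of the integrand. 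You should also make explicit the reduction to the submodels $\theta_j=(\alpha^j,\beta)$ and the final passage from convergence under $P_n(\cdot|\vartheta)$ to convergence under the Bayesian joint law. As written, your proposal is a correct plan with the decisive estimate unproven.
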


\begin{proof}
We only show the former since proof for the latter is almost the same. 
First we show tightness of $\sqrt{n}(\theta-\theta^*)$. 
We have  $\sqrt{n}(\theta-\theta^*)= \sqrt{n}(\theta-\hat{\theta}_n(x_n,y_n))-\sqrt{n}(\theta^*-\hat{\theta}_n(x_n,y_n))$
and the both terms in the right hand side 
have the same law $\Pi_n^*(x_n,y_n,\cdot)$ defined after Corollary \ref{bvmthm} under 
$(\Pi_n^X\otimes K_n^X)(x_n,y_n,d\vartheta,d\vartheta^*)$. 
Hence the tightness for $\sqrt{n}(\theta-\theta^*)$ follows by Corollary \ref{bvmthm}. 
For any $\epsilon>0$, fix $C_\epsilon$ to be the probability of the event 
$\{\sqrt{n}|\theta-\theta^*|>C_\epsilon\}$ is lower than $\epsilon$ in the limit. 
In the following, we only consider under the event 
$\{\sqrt{n}|\theta-\theta^*|\le C_\epsilon\}$.

As the comment before Proposition \ref{unipro}, we consider 
simpler models. 
It is sufficient to show the convergence 
of $\sqrt{n}|\theta_j-\theta^*_j|$ for
\begin{displaymath}
\theta_j=(\alpha^j,\beta),\ \theta_j^*=(\alpha^{j*},\beta^*)
\end{displaymath}
for $j=2,\ldots, c-1$. For each $j$, for $\xi^i=(1,(x^i)^T)^T$, 
\begin{displaymath}
\left\{
\begin{array}{ccc}
(\theta_j^*)^T\xi^i<z^i&\mathrm{if}& y^i\ge j+1\\
(\theta_j^*)^T\xi^i\ge z^i&\mathrm{if}& y^i\le j
\end{array}
\right. 
\end{displaymath}
since $\vartheta^*$ comes from $P_n(d\vartheta^*|x_n,y_n,z_n)$ (see (\ref{xfullmodel})). 
By simple algebra, for each fixed $\xi_0$, 
\begin{displaymath}
\left\{
\begin{array}{ccc}
(\theta_j^*-\theta_j)^T\xi_0<z^i-\theta_j^T\xi-(\theta_j^*-\theta_j)^T(\xi-\xi_0)&\mathrm{if}& y^i\ge j+1\\
(\theta_j^*-\theta_j)^T\xi_0\ge z^i-\theta_j^T\xi-(\theta_j^*-\theta_j)^T(\xi-\xi_0)&\mathrm{if}& y^i\le j
\end{array}
\right. .
\end{displaymath}

Assume that $\xi_0$ is in the support of $P_\xi$ and set $r=\epsilon/2C_\epsilon$. By the above inequality, we have
\begin{displaymath}
S_n-\frac{\epsilon}{2\sqrt{n}}<(\theta_j^*-\theta_j)^T\xi_0<T_n+\frac{\epsilon}{2\sqrt{n}}
\end{displaymath}
where 
\begin{displaymath}
S_n=\max_{y^i\le j,\xi^i\in B_r(\xi_0)}(z^i-(\theta_j^*)^T\xi^i),\ T_n=\min_{y^i\ge j+1,\xi^i\in B_r(\xi_0)}(z^i-(\theta_j^*)^T\xi^i).
\end{displaymath}
Now we show that the probabilities of events $\{S_n<-\epsilon/2\sqrt{n}\}$
and $\{T_n>\epsilon/2\sqrt{n}\}$ are negligible.
Since the proof is the same, we only show for $S_n$.
The event is
\begin{displaymath}
\{(x_n,y_n,z_n); S_n<-\epsilon/2\sqrt{n}\}=\bigcap_{i=1}^n \{
(x_n,y_n,z_n); (x^i,y^i,z^i)\in E\}
\end{displaymath}
where $E\subset X\times Y\times Z$ is
\begin{displaymath}
E=\{y>j\}\cup\{\xi\notin B_r(\xi_0)\}\cup\{y\le j,\xi\in B_r(\xi_0), (z-(\theta_j^*)^T\xi)<-\epsilon/2\sqrt{n}\}.
\end{displaymath}
Note that $E^c=
\{\xi\in B_r(\xi_0), 0\ge (z-(\theta_j^*)^T\xi)\ge -\epsilon/2\sqrt{n}\}$. 
When we write $p_n$ for the probability of the event $E^c$ with respect to 
$P(dxdydz|\vartheta)$, we have
\begin{equation}\label{pros}
\int 1(\{S_n<-\epsilon/2\sqrt{n}\})P_n(dx_n,dy_ndz_n|\vartheta)
=(1-p_n)^n.
\end{equation}
This value tends to $0$ if $\lim_{n\rightarrow\infty}np_n=+\infty$
and in fact $n^{1/2}p_n$ equals to
\begin{displaymath}
n^{1/2} \int_{\xi\in B_r(\xi_0)}(F(g\theta_j^T\xi)-F(g\theta_j^T\xi-\frac{g\epsilon}{2\sqrt{n}}))P_\xi(d\xi)
\rightarrow \frac{g\epsilon}{2}\int_{\xi\in B_r(\xi_0)}(f(g\theta_j^T\xi)P_\xi(d\xi)
\end{displaymath}
where the limit is strictly positive. Hence (\ref{pros}) tends to $0$ for each $\vartheta$. Its integration by $\Lambda^X$
also tends to $0$ by the bounded convergence theorem.
Hence
$\sqrt{n}(\theta_j^*-\theta_j)^T\xi_0$
tends in probability to $0$. 

By showing the convergence 
$\sqrt{n}(\theta_j^*-\theta_j)^T\xi_i$ for $i=1,2,\ldots, p$
for $\mathrm{span}(\xi_0,\ldots, \xi_p)=\mathrm{supp}\ P_\xi$, 
the claim of the lemma follows. 
\end{proof}

For both cases, if $\{P(dxdydz|\vartheta);\vartheta_M\in\Theta_M^X\}$
for fixed $\vartheta_F\in\Theta_F^X$
has sufficient regularity, then the convergence 
$\int \|(K_{M,n}^X-\overline{K}_{M,n}^X)(x_n,y_n,\vartheta,\cdot)\|P_n(dx_ndy_n|\vartheta)\rightarrow 0$ comes from the proof of \cite{Kamatani10}
as described in the end of the previous subsection. 

Let $\Sigma=\int xx^TP_x(dx)$, $\mu=\int xP_X(dx)$ and 
$L=\int f'(z)/f(z)(f'(z)/f(z)z+1)dz$. 

\begin{lemma}
For each update, $\{P(dxdydz|\vartheta);\vartheta_M\in\Theta_M^X\}$
for fixed $\vartheta_F\in\Theta_F^X$
is quadratic mean differentiable 
having the same support for any $\vartheta_M\in\Theta_M^X$. Moreover there exists a uniformly consistent test.  
In particular, 
\begin{displaymath}
\int \|(K_{M,n}^X-\overline{K}_{M,n}^X)(x_n,y_n,\vartheta,\cdot)\|P_n(dx_ndy_n|\vartheta)\rightarrow 0.
\end{displaymath} 
\end{lemma}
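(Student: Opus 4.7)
The plan is to verify, for each fixed $\vartheta_F$, the hypotheses of Theorem 6.4 of \cite{Kamatani10} applied to the profile family $\{P(dxdydz\mid\vartheta);\vartheta_M\in\Theta_M^X\}$; the displayed total variation bound is then exactly the conclusion of that theorem, specialised to the approximation $\overline{K}_{M,n}^X$ constructed in the preceding subsection.

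First I would establish quadratic mean differentiability in $\vartheta_M$. In both updates the indicator cutting the support of $z$ involves only $\vartheta_F$ (the thresholds $\alpha$ and $\beta$ in the null-conditional case, just $\alpha$ in the $\beta^Tx$-conditional case), so $\partial_{\vartheta_M}$ differentiates only the smooth factor $gf(\,\cdot\,)$. Taking $\eta(x,y,z\mid\vartheta)=\partial_{\vartheta_M}p/(2\sqrt{p})$ and following the estimate given in the proof of quadratic mean differentiability in Section \ref{apc}, I would bound $\eta$ uniformly on compact subsets of $\vartheta_M$ using Assumption \ref{assqmd} together with the finite-variance condition on $1+zf'(z)/f(z)$ in Assumption \ref{asspri}, which is exactly what is needed to control the $g$-score. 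Continuity of the reduced Fisher information $K_M(\vartheta)$ then follows by dominated convergence, and Theorem 12.2.2 of \cite{LR} delivers QMD.

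Second, the support of $P(dxdydz\mid\vartheta)$ in $X\times Y\times Z$ depends only on $\vartheta_F$ (again because $f>0$ throughout and the interval selecting admissible $z$ uses only $\vartheta_F$); this gives the support-invariance hypothesis required by Theorem 6.4. Third, to construct a uniformly consistent test I would recycle Lemmas \ref{unilem} and \ref{unilem2}: reducing coordinatewise, the $\beta$ and $\alpha$ components are handled by two-observation tests satisfying (\ref{c2mod}) exactly as in Section \ref{apc}, while for the scale coordinate $g$ one uses that the latent $z$ has a different dispersion under $g$ and $g_0$, so a two-observation test based on whether $|z|$ exceeds a well-chosen threshold on a $P_X$-positive set separates any $g$ with $|g-g_0|\ge\epsilon$ from $g_0$ on a compact neighbourhood.

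With QMD, support invariance, prior regularity via Assumption \ref{asspri}, and a uniformly consistent test in place for the profile family, Theorem 6.4 of \cite{Kamatani10} produces the displayed convergence. The main obstacle is the uniform boundedness of $\eta$ along the $g$-direction, since the scale parameter multiplies the argument of $f$ and so the partial derivative in $g$ picks up a factor of the form $1+zf'(gz)/f(gz)$; this is precisely where the integrability of $(1+zf'(z)/f(z))^2$ in Assumption \ref{asspri} is essential. Once that estimate is in hand the rest of the proof is a transcription of the arguments already given in Section \ref{apc}, with $(x,z)$ playing the role of an augmented explanatory variable.
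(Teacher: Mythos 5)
Your overall architecture matches the paper's: establish quadratic mean differentiability of the profile family (via Theorem 12.2.2 of \cite{LR} and boundedness/continuity of the score, with Assumption \ref{asspri} controlling the $g$-direction), observe that the support is governed by $\vartheta_F$ alone, produce a uniformly consistent test, and then feed all of this into Theorem 6.4 of \cite{Kamatani10} to get the total variation convergence. Where you genuinely diverge is the test. The paper does not build a new test at all: it notes that for fixed $\vartheta_F=\theta_0$ the $(x,y)$-marginal of the augmented family is the submodel $\{P(dxdy|\theta);\theta=g\theta_0\}$ of the original cumulative link model, so the uniformly consistent test already obtained in Subsection \ref{uct} restricts to this submodel, and a continuous reparametrization $\theta\mapsto(\theta_0,\theta/|\theta_0|)$ transports it to the $\vartheta_M$-parametrization; the latent $z_n$ is never used. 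You instead construct a fresh test from the joint data, handling $\beta$ (and $\alpha$) by the two-observation tests of Lemmas \ref{unilem} and \ref{unilem2} and $g$ by the dispersion of $z$. Two comments on that. First, a one-sided threshold on $|z|$ cannot satisfy (\ref{c2mod}) for a scale parameter ranging over $(0,\infty)$, because $P(|z|>t)\rightarrow 1$ as $g\rightarrow 0$ but $\rightarrow 0$ as $g\rightarrow\infty$; you need a two-sided acceptance region (reject when $|z|$ is either very large or very small), which is a small but necessary repair to your sketch. Second, your route has a real advantage that the paper's terse ``the same argument holds'' glosses over: for the $\beta^Tx$-conditional update with $c=2$ there is no fixed $\alpha$, $\vartheta_M=(\beta,g)$ is the entire parameter, and the $(x,y)$-marginal identifies only the product $g\beta$, so a test based on the marginal alone cannot separate $(\beta_0,g_0)$ from $(2\beta_0,g_0/2)$; a test that actually uses $z_n$, as yours does, is what is needed there. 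So your proposal is correct modulo the threshold fix, buys robustness in the non-identified-marginal case, at the cost of a longer construction than the paper's submodel trick.
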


\begin{proof}
For each conditional update, the quadratic mean differentiability of 
the parametric family
 $\{P(dxdydz|\vartheta);\vartheta_M\in\Theta_M^X\}$
for fixed $\vartheta_F\in\Theta_F^X$
comes from
the continuity of the corresponding Fisher information matrices: 
for null-conditional update and $\beta^Tx$-conditional update, the matrices are 
\begin{displaymath}
K_M(\vartheta)=g^{-2}K,\ 
K_M(\vartheta)=
\left( \begin{matrix} g^2K\Sigma&L\mu\\ \mu^TL&g^{-2}K \end{matrix} \right)
\end{displaymath}
with respectively. 
The condition for the support is clear. 
We show the existence of uniformly consistent test. 

For null-conditional update, 
write $\theta_0$ for the fixed parameter $\vartheta_F$. 
Consider a submodel 
$\{P(dxdy|\theta);\theta=g\theta_0,g\in (0,\infty)\}$
of original model. 
Then by Subsection \ref{uct}, this submodel has uniformly consistent test. 
Now we consider a re-parametrization $F: \theta\rightarrow (\theta_0, \theta/|\theta_0|)$. Since $F$ is continuous, 
re-paramezrized model, which is in fact 
$\{P(dxdy|\vartheta);\vartheta=(\theta_0,g), g\in (0,\infty)\}$ has also uniformly consistent test.

 The same argument hold for $\beta^Tx$-conditional update. This proves the claim. 
\end{proof}

We omit the proof of the following proposition since 
it is similar to that of Proposition \ref{MAnc}. 

\begin{proposition}
The standard Gibbs sampler without marginal augmentation is  not locally consistent except for 
 $\beta^Tx$-conditional update for $c= 2$. 
\end{proposition}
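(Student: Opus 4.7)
The plan is to establish local degeneracy of the Gibbs sampler (or of a suitable projection of it) in each case and then use Bernstein--von Mises together with Proposition \ref{degpro2} to preclude local consistency.

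First I would replay the proof of Lemma \ref{nullem} with $g\equiv 1$, i.e.\ without augmentation. That proof does not really use $g$ as a moving parameter to pin down $\theta$ (null-conditional) or $\alpha$ ($\beta^Tx$-conditional): the crucial ingredients are only the threshold inclusions $(\theta_j^*)^T\xi^i<z^i$ versus $\ge z^i$ together with the lower bound $n^{1/2}p_n\to (g\epsilon/2)\int_{B_r(\xi_0)}f(g\theta_j^T\xi)P_\xi(d\xi)>0$, which remains strictly positive when $g=1$. The tightness of $\sqrt n(\theta-\theta^*)$ is supplied directly by Corollary \ref{bvmthm} applied to $P_n(d\theta|x_n,y_n)$. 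Reproducing the argument yields
\begin{displaymath}
\int \min\{\sqrt{n}|\theta-\theta^*|,1\}(\Pi_n\otimes K_n)(x_n,y_n,d\theta,d\theta^*)P_n(dx_ndy_n)\to 0
\end{displaymath}
for the null-conditional update (any $c\ge 2$), and the same display with $|\alpha-\alpha^*|$ in place of $|\theta-\theta^*|$ for the $\beta^Tx$-conditional update when $c\ge 3$. By Proposition \ref{onestepGibbs}, the first gives local degeneracy of $\overline{\mathcal{M}}_n$; the second gives local degeneracy of the projection $\overline{\mathcal{M}}_n^F$ with $F(\theta)=\alpha$.

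Next I would invoke Corollary \ref{bvmthm}: the localized posterior $\Pi_n^*(x_n,y_n,\cdot)$ is close in total variation to $N(0,I(\hat{\theta}_n)^{-1})$, which under Assumption \ref{assqmd} has positive-definite covariance, as does its marginal on the $\alpha$-block. Hence neither $\Pi_n^*$ nor $\Pi_n^{F*}$ is degenerate in the sense defined just before Proposition \ref{degpro2}. By the localized form of Proposition \ref{degpro2}, a locally consistent and locally degenerate sequence of Monte Carlo procedures forces its target to be locally degenerate. In the null-conditional case this contradicts non-degeneracy of $\Pi_n^*$, so $\overline{\mathcal{M}}_n$ cannot be locally consistent. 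In the $\beta^Tx$-conditional case with $c\ge 3$, if $\overline{\mathcal{M}}_n$ were locally consistent then Lemma \ref{coneq} applied to the globally $C^1$ map $F(\theta)=\alpha$ would make $\overline{\mathcal{M}}_n^F$ locally consistent to $\Pi_n^F$, and Proposition \ref{degpro2} would then force $\Pi_n^F$ to be locally degenerate, a contradiction.

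The main obstacle will be the bookkeeping required to adapt Lemma \ref{nullem} to the unaugmented setting and to spell out the localized version of Proposition \ref{degpro2} precisely (it is immediate from the proof of that proposition on replacing $s(0)$ by $n^{1/2}(s(0)-\hat{\theta}_n)$ and $\Pi_n$ by $\Pi_n^*$ throughout). The excluded case $c=2$ with $\beta^Tx$-conditional update falls outside this machinery precisely because there is no $\alpha$ coordinate to pin down, and indeed that case is locally consistent by Theorem 6.4 of \cite{Kamatani10}.
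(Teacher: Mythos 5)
Your argument is correct and is essentially the proof the paper intends: the paper omits this proof and refers to that of Proposition \ref{MAnc}, and your proposal is exactly that argument with the marginal-augmentation parameter $g$ stripped out, so that the normalizations $\theta/|\theta|$ and $\alpha/|\alpha|$ used there can be replaced by $\theta$ itself (null-conditional) and by the projection $F(\theta)=\alpha$ ($\beta^Tx$-conditional, $c\ge 3$), with Lemma \ref{coneq} and the localized Proposition \ref{degpro2} closing the contradiction just as in the paper. The only quibble is a citation: your displayed one-step displacement bound gives local degeneracy via Proposition \ref{degpro1} (stationarity plus condition (\ref{degenstatio})), which is what the paper invokes at the corresponding step, rather than via Proposition \ref{onestepGibbs}.
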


For the excepted case, local consistency holds. The proof is directly comes from 
Theorem 6.4 of \cite{Kamatani10}. 

\begin{proposition}
The standard Gibbs sampler without marginal augmentation is  locally consistent for 
 $\beta^Tx$-conditional update for $c= 2$. 
\end{proposition}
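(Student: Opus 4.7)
The plan is to apply Theorem 6.4 of \cite{Kamatani10} directly, as suggested by the remark following the statement. The main work is to check that all the hypotheses of that theorem are satisfied for the specific augmented model in this case. The key reason $c=2$ with $\beta^Tx$-conditional update is the fortunate exception is that the parameter decomposition $\vartheta=(\vartheta_F,\vartheta_M)$ collapses: when $c=2$ there are no $\alpha$ parameters at all, so $\vartheta_F$ is trivial and the full parameter $\theta=\beta$ lies in the ``moving'' part $\vartheta_M$. In particular, the obstruction exploited in Lemma \ref{nullem} (where $\vartheta_F$ is almost fixed under the Gibbs update) is absent here.

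Concretely, I would proceed as follows. First, identify the augmented model $\{P(dxdydz\mid\beta);\beta\in\mathbf{R}^p\}$ with $z\sim f(z+\beta^Tx)\,dz$ and $y=1+1_{\{z>0\}}$, so that the $(x,y)$-marginal coincides with the cumulative link model (\ref{cummod}) for $c=2$. Second, verify quadratic mean differentiability of this augmented family at each $\beta$, using Assumption \ref{asspri} (continuity of $f'$ and finiteness of $K$) together with Assumption \ref{assqmd}; the calculation is analogous to the proof of quadratic mean differentiability of $P(dxdy\mid\theta)$ in Section \ref{apc}, and uniform boundedness of the score on compacts follows by the same compactness-of-support argument. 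Third, verify the existence of a uniformly consistent test for the augmented model: since the $(x,y)$-marginal already admits one by Proposition \ref{unipro}, the same test viewed as a function of $(x_n,y_n,z_n)$ through the projection onto $(x_n,y_n)$ serves as a uniformly consistent test for the augmented family. Fourth, confirm that the relevant Fisher information matrices $I(\beta)$ (from the $(x,y)$-marginal) and $K_M(\beta)$ (from the augmented model) are continuous in $\beta$ and strictly positive definite; this follows from Assumption \ref{assqmd}\ref{assqmdfoF} and the non-degeneracy of $\Sigma=\int xx^T P_X(dx)$, which is guaranteed by Assumption \ref{assqmd}.2.

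With these hypotheses in hand, Theorem 6.4 of \cite{Kamatani10} yields a normal approximation of the Gibbs transition kernel of the form
\begin{displaymath}
\overline{K}_n(x_n,y_n,\beta,d\beta^*)=N\bigl(\hat{\beta}_n+\hat{K}^{-1}\hat{J}(\beta-\hat{\beta}_n),\ n^{-1}\hat{K}^{-1}+n^{-1}\hat{K}^{-1}\hat{J}\hat{K}^{-1}\bigr),
\end{displaymath}
together with the Bernstein--von Mises approximation of $\Pi_n(x_n,y_n,\cdot)$ from Corollary \ref{bvmthm}, and this combination is precisely what is required to conclude local consistency under the scaling $\beta\mapsto n^{1/2}(\beta-\hat{\beta}_n)$. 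The only genuine obstacle is the fourth step: one must check that the operator norm of $\hat{K}^{-1}\hat{J}$ is strictly less than $1$ in the limit, so that the iterated Gaussian kernel mixes at the correct scale; however, this follows because $J=K-I$ is the missing-information matrix and $I$ is strictly positive definite at $\hat{\beta}_n$ (again by the non-degeneracy built into Assumption \ref{assqmd}). Everything else is routine verification of regularity already developed earlier in the paper.
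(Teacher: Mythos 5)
Your proposal is correct and follows exactly the route the paper takes: the paper's entire proof is the observation that the regularity conditions of Theorem 6.4 of \cite{Kamatani10} are satisfied in this case, and your verification of quadratic mean differentiability, the uniformly consistent test, and the nondegenerate information matrices simply spells out that check in detail. Your identification of why $c=2$ with $\beta^Tx$-conditional update is the exception (the ``fixed'' block $\vartheta_F$ is empty, so the whole parameter moves) matches the paper's implicit reasoning.
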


\begin{proof}
In this case, the regularity condition of Theorem 6.4 of \cite{Kamatani10}
is satisfied. Hence the claim holds. 
\end{proof}

\begin{proposition}\label{MAnc}
The standard Gibbs sampler with marginal augmentation is not locally consistent in the following cases:
\begin{enumerate}
\item $p\ge 2$ or $c\ge 3$ for null-conditional update. 
\item  $c\ge 4$ for $\beta^Tx$-conditional update. 
\end{enumerate}
\end{proposition}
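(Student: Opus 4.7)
The plan is to reduce non-local-consistency of $(\mathcal{M}_n^X,g\theta)$ to the same property for a one-dimensional projection $(\mathcal{M}_n^X,F(g\theta))$ where, writing $\tilde{\theta}:=g\theta$, $F$ is (locally) scale-invariant on $\Theta$. I would exhibit a map $F:\Theta\to\mathbf{R}$ whose restriction to a neighbourhood of $\theta_0$ is a ratio of coordinates of $\tilde{\theta}$ (extended globally by a compactly supported bump so that the hypotheses of Lemma \ref{coneq} hold), show that $(\mathcal{M}_n^X,F(g\theta))$ is locally degenerate, show that $\Pi_n^{F*}$ is non-degenerate, and combine Lemma \ref{coneq} with the contrapositive of Proposition \ref{degpro2} (applied to the minimal representation) to force a contradiction.

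For the null-conditional update with $p\ge 2$ or $c\ge 3$ (so $d:=p+c-2\ge 2$) I would take $F$ locally equal to $\tilde{\theta}^i/\tilde{\theta}^j$ for two indices $i\ne j$ with $\theta_0^j\ne 0$: when $c\ge 3$ set $j$ to an $\alpha$-index and use $\alpha_0^j>0$; when $c=2$, $p\ge 2$, a generic $\beta_0$ has a nonzero coordinate. For the $\beta^Tx$-conditional update with $c\ge 4$ I would take $F$ locally equal to $\alpha^i/\alpha^j$ for $i,j\in\{2,\ldots,c-1\}$; since $\tilde{\theta}=(g\alpha,g\beta)$, this is automatically scale-invariant in $\tilde{\theta}$ and depends only on $\alpha$. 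In both cases $\nabla F(\theta_0)\ne 0$.

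For local degeneracy, stationarity of $M_n^X$ and Proposition \ref{degpro1} reduce the task to $\sqrt{n}\,|F(\tilde{\theta}^*)-F(\tilde{\theta})|\to 0$ in probability under $(\Pi_n^X\otimes K_n^X)P_n$. On the high-probability event that both $\tilde{\theta}$ and $\tilde{\theta}^*$ lie where the bump equals $1$, scale-invariance collapses the increment to $F(\theta^*)-F(\theta)$ (depending only on $\theta^*-\theta$ in the null-conditional case, and only on $\alpha^*-\alpha$ in the $\beta^Tx$ case), which the mean-value theorem bounds by $\sup|\nabla F|\cdot|\theta^*-\theta|$ and by $\sup|\nabla F|\cdot|\alpha^*-\alpha|$ respectively. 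The posterior factorization $P_n(d\tilde{\theta},dg|x_n,y_n)\propto L_n(\tilde{\theta})\lambda(\tilde{\theta})\lambda_g(g)\,d\tilde{\theta}\,dg$ together with Corollary \ref{bvmthm} makes $\tilde{\theta}$ concentrate near $\theta_0$ and keeps $g,g^*$ tight in $(0,\infty)$, so the denominator $\theta^j=\tilde{\theta}^j/g$ stays bounded away from $0$ and $\sup|\nabla F|=O_P(1)$. Lemma \ref{nullem} then supplies the $o_P(n^{-1/2})$ factor.

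By Corollary \ref{bvmthm} and the delta method, $\Pi_n^{F*}$ converges weakly to $N\bigl(0,\nabla F(\theta_0)^TI(\theta_0)^{-1}\nabla F(\theta_0)\bigr)$, whose variance is strictly positive since $\nabla F(\theta_0)\ne 0$ and $I(\theta_0)^{-1}$ is positive definite; hence $\Pi_n^{F*}$ is not locally degenerate. If $(\mathcal{M}_n^X,g\theta)$ were locally consistent, Lemma \ref{coneq} would transfer local consistency to $(\mathcal{M}_n^X,F(g\theta))$, and Proposition \ref{degpro2} applied to its minimal representation would then force $\Pi_n^{F*}$ to be degenerate, a contradiction. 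The main obstacle is the bookkeeping in the local-degeneracy step: verifying that $g,g^*$ remain in a compact subset of $(0,\infty)$ with probability tending to one and that $\tilde{\theta}^j$ stays bounded away from $0$, both of which rest on the posterior factorization $\tilde{\theta}\perp g$ and on the choice of $j$ with $\theta_0^j\ne 0$. A subtler conceptual point is that in the $\beta^Tx$ case $\beta^*-\beta$ is not $o_P(n^{-1/2})$—this is precisely why $F$ must be designed to depend only on $\alpha$, and why the argument demands $c\ge 4$ so that at least two $\alpha$-coordinates are available to form a ratio.
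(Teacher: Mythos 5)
Your proposal is correct and follows essentially the same route as the paper: exploit a scale-invariant projection $F$ so that $(M_n^X,g\theta)^F=(M_n^X,\theta)^F$, deduce local degeneracy of this transform from Lemma \ref{nullem} via Proposition \ref{degpro1}, and contradict the non-degeneracy of $\Pi_n^{F*}$ through Lemma \ref{coneq} and Proposition \ref{degpro2}. The only difference is cosmetic: the paper takes $F(\theta)=\theta/|\theta|$ (resp.\ $F(\alpha)=\alpha/|\alpha|$), which sidesteps your need to choose a nonzero denominator coordinate, while your coordinate-ratio $F$ makes the dimension counts ($p+c-2\ge 2$, resp.\ $c-2\ge 2$) slightly more explicit.
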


\begin{proof}
For the null-conditional update wtih marginal augmentation, 
$\mathcal{M}_1=(M_n^X,\theta)$ is locally degenerate
by Lemma \ref{nullem} and Proposition \ref{degpro1}.
Therefore, by $F(\theta)=\theta/|\theta|$, 
$\mathcal{M}_2=\mathcal{M}_1^F=(M_n^X,\theta/|\theta|)$
is locally degenerate by Lemma \ref{degeq}. 
On the other hand, if $\mathcal{M}_3=(M_n^X,g\theta)$ is locally consistent, 
by mapping $G(\theta)=\theta/|\theta|$, 
$\mathcal{M}_2=\mathcal{M}_3^G$ should be locally consistent by Lemma \ref{coneq}. 
Since $P_n(d\vartheta|x_n,y_n)$ is not degenerate with the scaling with map $F$
for $p\ge 2$ or $c\ge 3$, 
it is impossible by Proposition \ref{degpro2}. Hence $\mathcal{M}_3=(M_n^X,g\theta)$ is not locally consistent. 

It is quite similar for $\beta^T x$-conditional update.
For this case,  $\mathcal{M}_1=(M_n^X,\alpha)$ is locally degenerate
and hence $\mathcal{M}_2=(M_n^X,\alpha/|\alpha|)$ is also locally degenerate
by a map $F(\alpha)=\alpha/|\alpha|$.  
On the other hand, 
if $\mathcal{M}_3=(M_n^X,g\theta)$ is locally consistent, 
then $\mathcal{M}_2=(M^X_n,\alpha/|\alpha|)$ should be locally consistent 
since $\mathcal{M}_2=\mathcal{M}_3^G$ for a map $G(\alpha,\beta)=\alpha/|\alpha|$. 
Since $P_n(d\vartheta|x_n,y_n)$ is not degenerate with the scaling with map $F$
for $c\ge 4$, 
it is impossible. 
Hence $\mathcal{M}_3=(M_n^X,g\theta)$ is not locally consistent.
\end{proof}

\begin{proposition}
The standard Gibbs sampler with marginal augmentation is  locally consistent in the following cases:
\begin{enumerate}
\item Null-conditional update for $c= 2$ and $p=1$. 
\item  $\beta^Tx$-conditional update for $c=2, 3$. 
\end{enumerate}
\end{proposition}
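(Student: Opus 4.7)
The plan is to combine the approximation of $K_n^X$ by $\overline{K}_{M,n}^X$ established in the previous subsection with Theorem 6.4 of \cite{Kamatani10} applied to the conditional submodel of $\vartheta_M$ given $\vartheta_F$, and then transfer local consistency from $\vartheta$ to the target parametrization $g\theta$ via Lemma \ref{coneq}. Under the approximation, $\vartheta_F$ stays frozen while $\vartheta_M$ performs a Gaussian AR(1) recursion driven by $\hat K_M$ and $\hat J_M$, which is exactly the setting Theorem 6.4 is designed for. The approximation error vanishes by the preceding lemma, so it suffices to analyze the approximate chain.

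First I would dispatch the easy case, $\beta^T x$-conditional with $c=2$. Here there is no $\alpha$ coordinate, so $\vartheta_F$ is empty and $\vartheta_M=(\beta,g)$. By the regularity established in the preceding lemma (quadratic mean differentiability, uniformly consistent test from Proposition \ref{unipro}, and positive definiteness of the Fisher information under Assumption \ref{asspri}), Theorem 6.4 of \cite{Kamatani10} directly yields local consistency of $(M_n^X,\vartheta)$. The map $F(\beta,g)=g\beta$ is $C^1$ on $\mathbf{R}^p\times(0,\infty)$, so Lemma \ref{coneq} transfers this to local consistency of $(M_n^X,g\beta)$.

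For the other two cases I would proceed in two steps. Step one: apply Theorem 6.4 of \cite{Kamatani10} to the conditional submodel $\{P(dxdydz|\vartheta_F,\vartheta_M);\vartheta_M\in\Theta_M^X\}$ with $\vartheta_F$ held at $\hat\vartheta_{F,n}$, using the preceding lemma for its regularity. This yields local consistency of the chain on $\vartheta_M$ with $\vartheta_F$ frozen. Step two: invoke the ``almost fixed'' relation $\hat\vartheta_{M,n}(x_n,y_n,\vartheta_F)\equiv^a \hat\vartheta_{M,n}(x_n,y_n)$ together with the vanishing of the kernel approximation error to promote the conditional local consistency to joint local consistency of $(M_n^X,\vartheta)$. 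Finally, apply Lemma \ref{coneq} with the $C^1$ map $F(\vartheta)=g\theta$ to obtain local consistency of $(M_n^X,g\theta)$.

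The main obstacle I anticipate is verifying that the Jacobian of $F$ restricted to the moving subspace $\vartheta_M$ has full rank at $\hat\vartheta_n$, which is what Lemma \ref{coneq} needs to transfer local consistency without loss. For null-conditional with $c=2,p=1$ the map is $F(\beta,g)=g\beta$ with $\partial F/\partial g=\beta$, nonzero on sets of asymptotically full probability since $\hat\beta_n$ concentrates around a nonzero value under Assumption \ref{assqmd}. For $\beta^T x$-conditional with $c=3$ the Jacobian of $F(\alpha^2,\beta,g)=(g\alpha^2,g\beta)$ with respect to the moving coordinates $(\beta,g)$ is the $(p+1)\times(p+1)$ matrix whose determinant is (up to sign) $g^p\alpha^2$, which is nonzero because $\alpha^2>0$ on $\Theta$ and $g>0$. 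In both cases the variation of $\vartheta_M$ spans the full localized target, so local consistency of the conditional chain lifts to local consistency of $(M_n^X,g\theta)$, and the proposition follows.
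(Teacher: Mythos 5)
Your high-level picture is right: the moving block $\vartheta_M$ carries the mixing, the frozen block is controlled by the degeneracy lemma, and the rank of the map $\vartheta_M\mapsto g\theta$ is exactly what separates the consistent cases ($p=1$ for null-conditional, $c\le 3$ for $\beta^Tx$-conditional) from the rest. But the formal route breaks at your ``step two.'' You cannot promote conditional local consistency of the $\vartheta_M$-chain to ``joint local consistency of $(M_n^X,\vartheta)$'': in every case with nonempty $\vartheta_F$ the $\vartheta_F$-component is locally degenerate (Lemma \ref{nullem}), and, worse, the augmented posterior $P_n(d\vartheta|x_n,y_n)$ does not contract at rate $n^{-1/2}$ along the non-identifiable fiber $\{g\theta=\mathrm{const}\}$ (the likelihood depends on $\vartheta$ only through $g\theta$, so along the fiber the posterior is essentially the prior). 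Hence the localized target $\Pi_n^*$ for $\vartheta$ is not tight, and neither the definition of local consistency nor the tightness hypotheses of Lemma \ref{coneq} apply to $(M_n^X,\vartheta)$. A related omission: even if you push forward only the $\vartheta_M$-chain by $\vartheta_M\mapsto g\theta(\hat\vartheta_{F,n},\vartheta_M)$, Lemma \ref{coneq} delivers consistency to the image of the \emph{conditional} posterior of $\vartheta_M$, and you still must identify that image with the marginal posterior of $g\theta$, i.e.\ with the original posterior $P_n(d\theta|x_n,y_n)$; your proposal never makes this identification.

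The paper avoids both problems by never leaving the target coordinates: it writes out the transition kernel of the minimal representation of $(M_n^X,g\theta)$ (respectively $(M_n^X,(\beta/\alpha,g\alpha))$ for $\beta^Tx$-conditional with $c=3$), uses Lemma \ref{nullem} to justify replacing $\theta^*$ by $\theta$ \emph{inside that kernel} via the quantitative bound $|\phi(\theta^{**};\theta^*\hat g_n(z_n),\cdot)-\phi(\theta^{**};\theta\hat g_n(z_n),\cdot)|\le n^{1/2}|\theta^*-\theta|\,C(\hat g_n(z_n))$, and then recognizes the resulting kernel as total-variation close to that of a standard Gibbs sampler for a regular submodel, to which the argument of Theorem 6.4 of \cite{Kamatani10} applies. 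To salvage your plan you would need to carry out this kernel-level replacement rather than argue through local consistency of the full $\vartheta$-chain; your Jacobian computation then reappears, correctly, as the statement that the approximate kernel targets a full-dimensional Gaussian in the $g\theta$ coordinates. Your treatment of the one genuinely easy case ($\beta^Tx$-conditional, $c=2$, where $\vartheta_F$ is empty) is fine and matches the paper.
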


\begin{proof}
For null conditional update case, consider 
$\mathcal{M}_1=(M_n^X,g\theta)$ where $\theta=\beta$. 
The probability transition kernel $K^X_n(x_n,y_n,\theta,d\theta^{**})$ of its minimal representation is
\begin{displaymath}
\int_{z_n,g^*,\theta^*} P_n(dz_n|x_n,y_n,(\theta,1))P_n(dg^*d\theta^*|x_n,y_n,z_n)\delta_{g^*\theta^*}(d\theta^{**}).
\end{displaymath} 
We show that we can replace $\theta^*$ by $\theta$ in the above transition kernel. 
Write $\hat{g}_n(z_n)$ for the central value of $P_n(dg|x_n,y_nz_n)=P_n(dg|z_n)$. 
First we apply Bernstein von-Mises's theorem for $\{P(dxdydz|\vartheta);\vartheta=(1,g), g\in (0,\infty)\}$
for the approximation $P_n(dg|x_n,y_n,z_n)\sim N(\hat{g}_n(z_n), n^{-1}K^{-1})$. By this approximation,  we can approximate $K_n^X$
by $L^X_n(x_n,y_n,\theta,d\theta^{*})$ defined by
\begin{displaymath}
\int_{z_n,\theta^*} P_n(dz_n|x_n,y_n,(\theta,1))\phi(\theta^{**};\theta^*\hat{g}_n(z_n),n^{-1}(\theta^*)^2K^{-1})P_n(d\theta^*|x_n,y_n,z_n).
\end{displaymath} 
For some continuous function $C$, uniformly in $g^*$, 
\begin{displaymath}
|\phi(\theta^{**};\theta^*\hat{g}_n(z_n),n^{-1}(\theta^*)^2K^{-1})
-\phi(\theta^{**};\theta\hat{g}_n(z_n),n^{-1}\theta^2 K^{-1})|\le n^{1/2}|\theta^*-\theta| C(\hat{g}_n(z_n)).
\end{displaymath} 
Hence by tightness of $\hat{g}_n(z_n)$ and convergence of
$n^{1/2}|\theta^*-\theta|$ to $0$ in probability, we can replace $\theta^*$
of $L_n^X(x_n,y_n,\theta,d\theta^{**})$ by $\theta$ (see the proof of Theorem 6.4 of \cite{Kamatani10}). Then using Bernstein-von Mises's theorem again, 
it is validated to replace $\theta^*$ in $K_n^X$ in the sense of 
 $\int\|(K_n^X-{K}_n^{'X})(x_n,y_n,\theta,\cdot)\|P_n(dx_ndy_n|\theta)\rightarrow 0$
where ${K}_n^{'X}$ is the transition kernel after replacement of $\theta^*$ by $\theta$. 
We already have an approximation of $K^{'X}_n$. 
Therefore $\mathcal{M}_1$ is locally consistent by the convergence of total variation by the same argument in the proof of Theorem 6.4 of \cite{Kamatani10}. 

By the similar argument, for $\beta^Tx$-conditional update for $c=2, 3$, 
$(M_n^X,\vartheta)$ or $(M_n^X,(\beta/\alpha,g\alpha))$
are locally consistent with respectively. 
Therefore 
$(M_n^X,g\theta)$
is locally consistent
by a map $F(\vartheta)=g\theta$ for the former and $F(\beta,g)=(g\beta,g)$
for the latter. 
\end{proof}

\bibliographystyle{plainnat}

\end{document}